\tikzset{
block/.style = {draw, fill=white, rectangle, minimum height=2.5em, minimum width=3em},
tmp/.style = {coordinate},
sum/.style= {draw, fill=white, circle, node distance=1cm},
input/.style = {coordinate},
output/.style= {coordinate},
pinstyle/.style = {pin edge={to-,thin,black}}}
\newtheorem{prop}{Proposition}
\newtheorem{remark}{Remark}
\renewcommand{\arraystretch}{0.7}
\begin{document}

\begin{frontmatter}

\title{\vspace{-1.2cm} \huge \bf Preview Reference Governors: \\A Constraint Management Technique \\ for Systems With Preview Information}
% \tnotetext[mytitlenote]{Fully documented templates are available in the elsarticle package on \href{http://www.ctan.org/tex-archive/macros/latex/contrib/elsarticle}{CTAN}.}

%% Group authors per affiliation:
\author[rvt]{Yudan Liu}
\ead{yliu38@uvm.edu}

\author[rvt]{Hamid. R. Ossareh}
\ead{Hamid.Ossareh@uvm.edu}
\address[rvt]{Department of Electrical and Biomedical Engineering, University of Vermont, Burlington, VT, USA, 05405}

%\cortext[cor1]{Corresponding author}
%\cortext[cor2]{Principal corresponding author}

\fntext[]{Funding for this work was generously provided by Ford under the URP Award 2016-8004R, and NASA under grant 80NSSC20M0213.}

\begin{abstract}
This paper presents a constraint management strategy based on Scalar Reference Governors (SRG) to enforce output, state, and control constraints while taking into account the preview information of the reference and/or disturbances signals. The strategy, referred to as the Preview Reference Governor (PRG), can outperform SRG while maintaining the highly-attractive computational benefits of SRG. However, as it is shown, the performance of PRG may suffer if large preview horizons are used. An extension of PRG, referred to as Multi-horizon PRG, is proposed to remedy this issue. %, an extension of PRG, referred to as Multi-horizon PRG, is proposed.
Quantitative comparisons between SRG, PRG, and Multi-horizon PRG on a one-link  robot arm example are  presented to illustrate their performance and computation time. Furthermore, extensions of PRG are presented to handle systems with disturbance preview and multi-input systems. The robustness of PRG to parametric uncertainties and inaccurate preview information is also explored. 
\end{abstract}

\begin{keyword}
Constraints management \sep Reference governors \sep Preview control \sep Predictive control
\end{keyword}

\end{frontmatter}

%\linenumbers

\begin{abstract}

This paper presents a constraint management  strategy based on Scalar Reference Governors (SRG) to enforce output, state, and control constraints while taking into account the preview information of the reference and/or disturbances signals. The strategy, referred to as the Preview Reference Governor (PRG), can outperform SRG while maintaining the highly-attractive computational benefits of SRG. However, as it is shown, the performance of PRG may suffer if large preview horizons are used. An extension of PRG, referred to as Multi-horizon PRG, is proposed to remedy this issue. %, an extension of PRG, referred to as Multi-horizon PRG, is proposed.
Quantitative comparisons between SRG, PRG, and Multi-horizon PRG on a one-link arm robot example are  presented to illustrate their performance and computation time. The robustness of PRG to disturbance previews, parametric uncertainties, and inaccurate preview information is also explored.

% PRG is based on augmenting the system and taking the preview information as the new input, followed by the modification of SRG. 

\end{abstract}

%%%%%%%%%%%%%%%%%%%%%%%%%%%%%%%%%%%%%%%%%%%%%%%%%%%%%%%%%%%%%%%%%%%%%%%%%%%%%%%%
\section{INTRODUCTION} \label{sec: intro}
Preview control has been a subject of study in the 
%The problems of control and constraint management of systems need to take into account aspects like disturbance rejection, tracking performance, and closed-loop stability just to mention a few. 
 field of control theory for many decades.
%, and a well-known scheme that tackles these problems is Model Predictive Control (MPC) \cite{bemporad2002model,morari1999model,camacho2013model}, which is an optimization-based method. 
%Researchers and practitioners have found that incorporating preview information of the reference or disturbance signals may improve the tracking and disturbance rejection performance of the system \cite{takaba2003tutorial}.
The main idea behind preview control is to incorporate known or estimated information on the future values of the disturbances or references (i.e., ``preview" information) in the computation of the current control command. Such preview may be computed from models or may be available from measurements. For example, in a wind turbine control application, preview information on wind velocity may be available from measurements taken elsewhere in the wind farm. Incorporating this information in the calculation of the control command can result in improved system performance  \cite{koerber2013combined}.

Common methods for preview control are $H_\infty$ preview control \cite{takaba2003tutorial}, LQR preview control \cite{farooq2005path}, and mixed $H_2$-$H_\infty$ method, just to mention a few (see \cite{birla2015optimal} for a comprehensive review). These methods, however, are not able to enforce pointwise-in-time state and control constraints, which is important to ensure safe and efficient system operation. A preview control method that can, in fact, enforce these constraints is Model Predictive Control (MPC) \cite{bemporad2002model,morari1999model,camacho2013model}, which is an optimization-based method that can naturally incorporate preview information in the cost function  \cite{gohrle2012active}. Some applications of preview-MPC can be found in
% autonomous driving \cite{Xu_2017,Zhang_2019},
active suspension \cite{Gohrle_2014},
%power systems \cite{Lio_2016},
swing leg trajectories of biped walking robots \cite{shimmyo2012biped}, and wind turbines \cite{koerber2013combined}. Since MPC is computationally demanding, its applicability has been limited to systems with slow or low-order dynamics, or systems controlled by fast processors     \cite{Cairano_2012}. 
A relatively new constraint management strategy, which alleviates the above computational challenges of MPC, is the Reference Governor, also referred to as the Scalar Reference Governor (SRG) \cite{gilbert1995discrete,Garone_2017,Kolmanovsky_2014,liu2020decoupled, garone2015explicit,ossareh2020reference}. It is an add-on mechanism that enforces pointwise-in-time state and control constraints by governing, whenever is required, the {\it reference} signal of a pre-stabilized closed-loop system (see Figure~\ref{fig:RGblock}). However, standard SRG uses only the value of the reference signal at the current time and is unable to take preview information into account. This paper fills this gap and presents a novel reference governor-based solution, referred to as the Preview Reference Governor (PRG), to enforce  the constraints  while incorporating the preview information of the reference and disturbance signals, which yields superior transient performance as compared to SRG. 

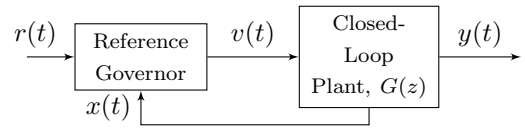
\begin{figure}
\centering
\begin{tikzpicture}[auto, node distance=1.5cm,>=latex']
\node [input, name=rinput] (rinput) {};
\node [block, right of=rinput,text width=1.5cm,align=center] (controller) {{\footnotesize Reference Governor}};
\node [block, right of=controller,node distance=3cm,text width=1.6cm,align=center] (system)
{{\footnotesize Closed-Loop Plant, $G(z)$}};
\node [output, right of=system, node distance=2cm] (output) {};
\node [tmp, below of=controller,node distance=0.9cm] (tmp1){$s$};
\draw [->] (rinput) -- node{\hspace{-0.4cm}$r(t)$} (controller);
% \draw [->] (sum1) -- (controller);
\draw [->] (controller) -- node [name=v]{$v(t)$}(system);
\draw [->] (system) -- node [name=y] {$y(t)$}(output);
\draw [->] (system) |- (tmp1)-| node[pos=0.75] {$x(t)$} (controller);
% \draw [->] (extra)--(sum2);
% \draw [->] ($(0,1.5cm)+(extra)$)node[above]{$d_{\beta 2}$} -- (extra);
\end{tikzpicture}
\caption{Scalar reference governor block diagram. The signals are as follows: $y(t)$ is the constrained output, $r(t)$ is the reference, $v(t)$ is the governed reference, and $x(t)$ is the system state (measured or estimated).} \label{fig:RGblock}
 \label{fig: Governor scheme}
\end{figure}

To explain PRG, we first summarize the main ideas behind SRG. %The SRG is a constraint management scheme that has become quite attractive for its simple formulation and significantly low computational load for real-time implementation.
The SRG employs the so-called Maximal Admissible set (MAS) \cite{Gilbert_1991}, which is defined as the set of all initial conditions and inputs that ensure constraint satisfaction for all times. This set is computed offline. In real-time, SRG computes an optimal $v(t)$ (See Figure \ref{fig:RGblock}) to maintain the system state inside the MAS and, thus, enforce the constraints. This is achieved by solving, at every time step, a simple linear program, whose solution can be computed explicitly. %However, it can be seen from Figure \ref{fig:RGblock} that only the reference at current time (i.e., $r(t)$) is used and no preview information is involved in SRG scheme.
Now reconsider the closed-loop plant $G(z)$ in Figure~\ref{fig: Governor scheme}, but assume that the preview information of the reference signal is available.  More specifically, ${r(t)}, {r(t+1)}, \ldots, {r(t+N)}$, are known to the controller at time $t$, where $N$ is the ``preview horizon". We initially assume that $G(z)$ has a single input (we will extend the theory to multi-input systems in Section \ref{sec: PRG for multi-inputs}). Similar to SRG, the goal of PRG is to select $v(t)$ as close as possible to $r(t)$ (to ensure that the tracking performance does not suffer) such that the output constraints are satisfied for all times. %(i.e., $y(t) \in \mathbb{Y},\forall t \in \mathbb{Z}_+$).
However, the PRG must take into consideration the preview information of $r(t)$ in order to improve the tracking performance as compared to SRG. To achieve these goals, we first lift  $r(t)$ and $v(t)$ from $\mathbb{R}$ to $\mathbb{R}^{(N+1)}$ in order to describe them over the entire preview horizon. Next, the system $G$ is represented based on the lifted input, and a new MAS is calculated based on this new representation. Finally, PRG is formulated as an extension of SRG, where a new optimization problem is solved based on the new MAS and a new update law is used to compute $v(t)$. A high-level block diagram of the PRG is shown in Figure \ref{fig: preview Governor scheme}, where the availability of the preview information of the reference is explicitly displayed.

As we show in this paper, PRG guarantees constraint satisfaction, recursively feasibility, and closed-loop stability. However, it may calculate conservative inputs under certain conditions, especially for long preview horizons. To overcome this issue and further improve performance, we present an extension of PRG. The extension, which we refer to as Multi-horizon PRG (abbreviated as Multi-$N$ PRG), is based on solving multiple PRGs with different preview horizons and optimally fusing the outputs of the PRGs.

We next consider the case where the system is affected by disturbances, whose preview information is available (in addition to the reference preview). The PRG solution for this case involves formulating a new MAS, wherein the disturbance preview appears explicitly as parameters of the MAS. To guarantee constraint satisfaction, the MAS is robustified to the worst-case realization of disturbances beyond the preview horizon. 

In practice, systems are affected by uncertainties in the model as well as the preview information. Therefore, we also consider systems under  parametric uncertainties and inaccurate preview information (in either reference or disturbance). We propose a ``robust" PRG formulation to robustify the design against these uncertainties. %Detailed information will be explained in Section \ref{sec: robust PRG}.

Finally, we propose two solutions to handle multi-input systems. The first solution comprises the same ideas as in PRG for single-input systems, except that we lift $r(t)$ and $v(t)$ from $\mathbb{R}^m$ to $\mathbb{R}^{(N+1)m}$, where $m$ is the number of inputs. However, as will be shown by an example, this solution might lead to a conservative response since a single optimization variable is used to govern all input channels. To overcome this shortcoming, we propose another solution, which combines the Decoupled Reference Governor scheme \cite{liu2018decoupled} and PRG theory. Detailed information will be explained in Section \ref{sec: PRG for multi-inputs}.
%Finally, we present several extensions of  PRG to disturbances, whose preview information is available, parametric uncertainty, and inaccurate preview information. %The PRG solution for this case involves formulating a new MAS, wherein the previewed information appears explicitly as parameters of the MAS. To guarantee constraint satisfaction, the MAS is robustified to the worst-case realization of disturbances beyond the preview horizon. We also consider the case there 

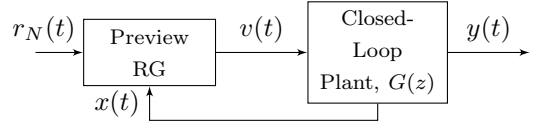
\begin{figure}
\centering
\begin{tikzpicture}[auto, node distance=1.5cm,>=latex']
\node [input, name=rinput] (rinput) {};
\node [block, right of=rinput,text width=1.5cm,align=center] (controller) {{\footnotesize Preview RG}};
\node [block, right of=controller,node distance=3cm,text width=1.6cm,align=center] (system)
{{\footnotesize Closed-Loop Plant, $G(z)$}};
\node [output, right of=system, node distance=2cm] (output) {};
\node [tmp, below of=controller,node distance=0.9cm] (tmp1){$s$};
\draw [->] (rinput) -- node{\hspace{-0.4cm}$r_N(t)$} (controller);
% \draw [->] (sum1) -- (controller);
\draw [->] (controller) -- node [name=v]{$v(t)$}(system);
\draw [->] (system) -- node [name=y] {$y(t)$}(output);
\draw [->] (system) |- (tmp1)-| node[pos=0.75] {$x(t)$} (controller);
% \draw [->] (extra)--(sum2);
% \draw [->] ($(0,1.5cm)+(extra)$)node[above]{$d_{\beta 2}$} -- (extra);
\end{tikzpicture}
\caption{Preview Reference Governor block diagram. $r_N(t)$ represents the lifted reference over the preview horizon, i.e., $r_N(t)=(r(t),\ldots,r(t+N))$.} 
 \label{fig: preview Governor scheme}
\end{figure}

In summary, the original contributions of this work are:
\begin{itemize}
    \item A novel RG-based constraint management scheme with preview capabilities, namely the PRG, which incorporates preview information of the references into the reference governor framework and is more computationally efficient than existing methods such as MPC or Command Governors (CG) \cite{Garone_2017};
    
    \item An extension of PRG (Multi-$N$ PRG) to further improve the performance of PRG;
    
    \item Analysis of recursively feasibility, closed-loop stability, and convergence under constant inputs.
    
    \item Comparison of the computational footprint and performance of these schemes.
    
    \item Extensions of PRG to  systems with disturbance preview, parametric uncertainties, inaccurate preview information, and multi-input systems.
\end{itemize}

% Note that, in reality, the system may be affected by disturbances (whose preview information may also be available), modeling errors, and/or preview errors/uncertainties. The investigation of these cases is beyond the scope of this paper and will be a subject of our future work.  

%This paper is structured as follows: Section \ref{sec: Review O_inf} reviews the Maximal Admissible Set and SRG theory. Section \ref{sec: preview RG} presents Preview Reference Governor (PRG) for single-input systems and its  limitations. Section \ref{sec: multi-horizon PRG} introduces the Multi-$N$ PRG and compares it with PRG and SRG. Section \ref{sec:computation} studies the computational aspects of PRG and Multi-$N$ PRG in comparison with SRG and CG. Section \ref{sec: robust PRG} presents the robust PRG to handle  disturbance preview, parametric uncertainties, and inaccurate preview information. Section \ref{sec: PRG for multi-inputs} studies the extension of PRG to multi-inputs systems. Section \ref{sec:conclusion} concludes the paper and presents topics of future research. 

The following notations are used in this paper. $\mathbb{Z}$ and $\mathbb{Z}_+$ denote the set of all integers and non-negative integers, respectively. The identity matrix with dimension $p\times p$ is denoted by $I_p$. The variable $t\in \mathbb{Z}_+$ is the discrete time. For vectors $x$ and $y$, $x \leq y$ is to be interpreted component-wise. A zero matrix with dimension $p \times q$ is denoted as $0_{p \times q}$.

%%%%%%%%%%%%%%%%%%%%%%%%%%%%%%%%%%%%%%%%%%%%%%%%%%%%%%%%%%%%%%%%%%%%%%%%%%%%%%%%

%\addtolength{\textheight}{-3cm}   % This command serves to balance the column lengths
                                  % on the last page of the document manually. It shortens
                                  % the textheight of the last page by a suitable amount.
                                  % This command does not take effect until the next page
                                  % so it should come on the page before the last. Make
                                  % sure that you do not shorten the textheight too much.

%%%%%%%%%%%%%%%%%%%%%%%%%%%%%%%%%%%%%%%%%%%%%%%%%%%%%%%%%%%%%%%%%%%%%%%%%%%%%%%%

\section{Review of Maximal Admissible Sets And Scalar Reference Governors}\label{sec: Review O_inf}
\subsection{Maximal Admissible Set (MAS)} \label{sec: review_maximal}
Consider Figure \ref{fig:RGblock}, where $G$ represents the closed-loop system whose dynamics are described by:
\begin{equation}\label{eq:system linear}
\begin{aligned}
x(t+1)&=Ax(t)+Bv(t)\\
y(t)&=Cx(t)+Dv(t) 
\end{aligned}
\end{equation}
where  $x(t)\in \mathbb{R}^n$ is the state vector, $v(t)\in \mathbb{R}^m$ is the input, and $y(t)\in \mathbb{R}^p$ is the constrained output vector. Since $G$ represents the closed-loop system with a stabilizing controller, it is assumed that $G$ is asymptotically stable. Over the output, the following polyhedral constraint is imposed:
\begin{equation}\label{eq: constraint set}
    y(t)\in \mathbb{Y}:=\{y:Sy\leq s\}
\end{equation}
The MAS,
denoted by $O_\infty$, is the set of all initial states and constant control inputs that
satisfy \eqref{eq: constraint set} for all future time steps:
\begin{equation}\label{eq:definition of O_inf}
%\begin{aligned}
O_\infty := \{(x_0,v_0)\in \mathbb{R}^{n+m}: x(0)=x_0,v(t)=v_0,
y(t)\in \mathbb{Y}, \forall t\in \mathbb{Z}_+ \}
%\end{aligned}
\end{equation}
As seen in \eqref{eq:definition of O_inf}, to construct MAS, $v(t)=v$ is held constant for all $t\in \mathbb{Z}_+$. Using this assumption, the evolution of the output $y(t)$ can be expressed explicitly as a function of the initial state $x(0)=x_0$ and the constant input $v_0$ as:
\begin{equation}\label{eq: model recursion}
\begin{aligned}
y(t)&=CA^tx_0+C(I-A)^{-1}(I-A^t)B v_0 +D v_0 .
\end{aligned}
\end{equation}

Therefore, the MAS in \eqref{eq:definition of O_inf} can be characterized by a polytope defined by  an infinite number of inequalities, i.e.,
% \begin{equation}\label{eq:definition of O_inf polytope}
% \begin{aligned}
% &O_\infty := \{(x_0,v_0)\in \mathbb{R}^{n+m}: x(0)=x_0,v(t)=v_0,\\
% &SCA^tx_0+C(I-A)^{-1}(I-A^t)B v_0 +D v_0 \leq s, \forall t\in \mathbb{Z}_+ \}
% \end{aligned}
% \end{equation}
\begin{equation}\label{eq:definition of O_inf polytope}
\begin{aligned}
&O_\infty := \{(x_0,v_0)\in \mathbb{R}^{n+m}:
H_x x_0 + H_v v_0 \leq h\}
\end{aligned}
\end{equation}
where $H_x=[CA^t], H_v=[C(I-A)^{-1}(I-A^t)B +D] $, and $h=[s^\top,s^\top,\ldots]^\top$. Reference \cite{Gilbert_1991}  shows that under mild assumptions on $C$ and $A$, it
is possible to make the set \eqref{eq:definition of O_inf polytope} finitely determined (i.e., the matrices $H_x$, $H_v$ and $h$ are finite dimensional) by constraining the steady-state
value of $y$, denoted by $y(\infty)$, to the interior of the constraint set, i.e.,
\begin{equation}\label{eq: y_infty}
    y(\infty):=(C(I-A)^{-1}B+D)v\in (1-\epsilon) \mathbb{Y}
\end{equation}
where $\epsilon>0$ is a small number. By introducing \eqref{eq: y_infty} in \eqref{eq:definition of O_inf polytope}, there exists a finite prediction  time $t^*$, where the inequalities corresponding to all future prediction times ($t>t^*$) are redundant. This yields an inner approximation of \eqref{eq:definition of O_inf polytope}, denoted by $\bar{O}_\infty$, which is finitely determined.  

\subsection{Scalar Reference Governors (SRG)}

The SRG calculates $v(t)$ based on $\bar{O}_\infty$ by introducing an internal state, whose dynamics are governed by:
\begin{equation}\label{eq: RG formulation}
v(t)=v(t-1)+\kappa(r(t)-v(t-1)),
\end{equation}
where $\kappa$ is found by solving the following linear program (LP):
\begin{equation}\label{eq:RG_kappa}
\begin{aligned}
\underset{\kappa\in [0,1]}{\text{maximize}}\quad
& \mathrm{\kappa} \\
\text{s.t.} \quad
&  (x(t),v(t)) \in \bar{O}_\infty\\
& v(t)=v(t-1)+\kappa(r(t)-v(t-1))\\
\end{aligned}
\end{equation}
From (\ref{eq: RG formulation}), the SRG computes an input $v(t)$ based on the convex combination of the previous admissible input value (i.e., $v(t-1)$) and the current reference $r(t)$ in order to guarantee constraint satisfaction. Also,  if the pair \mbox{$(x(t),v(t-1))$} is admissible, then $(x(t+1),v(t))$ is also admissible, implying recursively feasibility of the SRG \cite{Garone_2017}.

\section{Preview Reference Governors (PRG) for single-input systems} \label{sec: preview RG}
In this section, the PRG for single-input systems is  introduced and analyzed. In addition, PRG is compared with SRG using a numerical example to show that it can significantly improve the closed-loop system performance  while enforcing the constraints. An extension of PRG to multi-input systems will be explained in Section \ref{sec: PRG for multi-inputs}. As mentioned in the Introduction, PRG is based on lifting the reference (i.e., $r(t)$) and the governed reference (i.e., $v(t)$) from $\mathbb{R}$ to $\mathbb{R}^{(N+1)}$ and representing the system using the lifted input. Based on the new representation, a modified maximal admissible set is characterized and a new formulation of reference governor is proposed, as explained in detail below.
 
Consider the system shown in Figure \ref{fig: preview Governor scheme}. Assume $r(t)$, $r(t+1)$, $\ldots$, $r(t+N)$ are available at time $t$, where $r(t) \in \mathbb{R}$ is the current value of the setpoint and $r(t+1),\ldots,r(t+N) \in \mathbb{R}$ are the preview information, and $N\geq 0$ represents the preview horizon. Note that $N=0$ corresponds to the case where no preview information is available. We now define the lifted signals $r_N(t)\in \mathbb{R}^{(N+1)}$ (shown in Figure \ref{fig: preview Governor scheme}) and $v_N(t)\in \mathbb{R}^{(N+1)}$, as follows:
% \begin{equation}\label{eq: def of v_bar}
%  r_N(t)=\begin{bmatrix}
% r(t)\\
% \vdots\\
% r(t+N) \end{bmatrix}, v_N(t)=\begin{bmatrix}
% v(t)\\
% \vdots\\
% v(t+N) \end{bmatrix} 
% \end{equation}
\begin{equation}\label{eq: def of v_bar}
\begin{aligned}
 r_N(t)=(r(t),  \ldots, r(t+N)), \quad v_N(t)=(
v(t), \ldots, v(t+N))
\end{aligned}
\end{equation}
Using the lifted signals, $G(z)$ can be equivalently expressed as:
\begin{equation}\label{augment state-space system}
\begin{aligned}
    x(t+1) = A x(t) + 
    \underbrace{\begin{bmatrix}
    B & 0 & \ldots &0
    \end{bmatrix}}_{\widetilde{B}} v_N(t),\\
    y(t) = C x(t) + 
    \underbrace{\begin{bmatrix}
    D & 0 & \ldots &0
    \end{bmatrix}}_{\widetilde{D}} v_N(t)
\end{aligned}
\end{equation}

We now describe the Maximal Admissible Set (MAS) for this system. Recall from \eqref{eq:definition of O_inf} that in order to characterize MAS in the SRG framework, it is assumed that $v(t)$ is held constant for all time. This will ensure that the optimization problem \eqref{eq:RG_kappa} will always have a feasible solution, namely ${\kappa=0}$. In order to extend these ideas to PRG, we assume that $v(t)$ may vary within the preview horizon, but is held constant beyond the preview horizon. %, i.e., $\forall t\geq N$.
Therefore, the dynamics of $v_N(t)$ are chosen to be:
\begin{equation} \label{eq: A_bar}
\begin{aligned}
v_N(t+1)= \bar{A} v_N(t)
    \end{aligned}
    \end{equation}
with initial condition $v_N(0)=v_{N_0}$, where $\bar{A}$ is defined by: 
\begin{equation} \label{eq: A_bar_def}
\bar{A}=
\left[
	\begin{array}{c:c}
	\smash{\underbrace{\begin{matrix}
    0 & 1  &\ldots &0\\
    % 0 & 0 & I_{m} &\ldots &0 \\
    \vdots &\vdots  &\ddots & \vdots\\
    0 & 0  & \ldots &1 \\
    0 & 0  &\ldots &0\\ \hdashline
    0 & 0  &\ldots &0 \\
    \end{matrix}}_{N}}&
    \begin{matrix}
    0 \\[0.1mm] \vdots \\[0.1mm] 0 \\[0.1mm] 1  \\[1.5mm]\hdashline  1 
    \end{matrix}
    \end{array}
	\right]
	\end{equation}
% \begin{equation} \label{eq: A_bar_def}
% \bar{A}=
% \left[
% 	\begin{array}{c:c}
% 	\smash{\underbrace{\begin{matrix}
%     0 & I_{m}  &\ldots &0\\
%     % 0 & 0 & I_{m} &\ldots &0 \\
%     \vdots &\vdots  &\ddots & \vdots\\
%     0 & 0  & \ldots &I_{m} \\
%     0 & 0  &\ldots &0\AddLeft{1}\\ \hdashline
%     0 & 0  &\ldots &0 \\
%     \end{matrix}}_{Nm}}&
%     \begin{matrix}
%     0 \\[0.1mm] \vdots \\[0.1mm] 0 \\[0.1mm] I_{m}  \\[1.5mm]\hdashline  I_{m} 
%     \end{matrix}
%     \end{array}
% 	\right]
% 	\end{equation}
\vspace{0.1cm}
\par \noindent This choice of $\Bar{A}$, together with the definition of $v_N(t)$ in \eqref{eq: def of v_bar}, enforce that for all $t \geq N$, $v(t)= v(N)$. With these dynamics, the new MAS is defined by:
\begin{equation}\label{eq:O_inf for preview}
\begin{aligned}
O_\infty^N:=&\{(x_0,v_{N_0})\in \mathbb{R}^{n+(N+1)}: x(0)=x_0,v_N(0)=v_{N_0}\\
& v_N(t+1)=\Bar{A} v_N(t), y(t)\in \mathbb{Y}, \forall t \in \mathbb{Z}_+ \}
\end{aligned}
\end{equation}
% \begin{equation}\label{eq:O_inf for preview}
% \begin{aligned}
% O_\infty^N:=&\{(x_0,v_{N_0})\in \mathbb{R}^{n+(N+1)m}: x(0)=x_0,v_N(0)=v_{N_0}\\
% & v_N(t+1)=\Bar{A} v_N(t), y(t)\in \mathbb{Y}, \forall t \in \mathbb{Z}_+ \}
% \end{aligned}
% \end{equation}
Similar to \eqref{eq: y_infty},  a  finitely determined inner approximation of this set can be computed by tightening the steady-state constraint (to prove finite determinism, note that after $N$ time steps, $v(t)$  converges to a constant, which reduces the problem to that in the standard MAS theory). In the rest of this paper, with an abuse of notation, we use $O_\infty^N$ to denote the finitely determined inner approximation (instead of using $\bar{O}_\infty^N$).

With the new MAS defined, we are now ready to present the PRG formulation. Recall that the SRG computes $v(t)$ using the update law in  \eqref{eq: RG formulation}, where $\kappa$ is obtained by solving the linear program (LP) in \eqref{eq:RG_kappa}. Note from  \eqref{eq: RG formulation} that  $v(t) \in \mathbb{R}$ can be regarded as the internal state of the SRG (i.e., the SRG strategy consists of a single internal state). To extend these ideas to PRG, we introduce $(N+1)$ new states in the formulation, where the state update law is as follows: % computes the input $v_N(t)$ by the convex combination of $\bar{A}v_N(t-1)$ and the  current reference $r_N(t)$, i.e.,
\begin{equation}\label{eq:kapa}
v_N(t)=\Bar{A} v_N(t-1)+\kappa(r_N(t)-\Bar{A} v_N(t-1))
\end{equation}
where $\kappa$ is the solution of the following linear program:
\begin{equation}\label{eq: LP to compute kappa}
\begin{aligned}
&\underset{\kappa\in [0,1]}{\text{maximize}}
& & \mathrm{\kappa} \\
& \hspace{10pt} \text{s.t.}
& & v_N(t)=\Bar{A} v_N(t-1)+\kappa(r_N(t)-\Bar{A} v_N(t-1))\\
&&&(x(t),v_N(t)) \in O_\infty^N
\end{aligned}
\end{equation}
An explicit algorithm, similar to the one in SRG \cite{liu2018decoupled}, can be developed to solve this LP efficiently (see Section \ref{sec:computation} for details). The PRG solves the above LP at every time step to compute $\kappa$, updates $v_N(t)$ using \eqref{eq:kapa}, and applies the first element of $v_N(t)$ to the system $G$.  %Note that $\bar{A}$ and $O_\infty^N$ in \eqref{eq: LP to compute kappa} are independent of $t$; therefore, 
Note that the variables in the LP  in \eqref{eq: LP to compute kappa} at time step $t$ are $v_{N}(t-1)$, $r_N(t)$, and $x(t)$. 

\begin{remark} 
When $N=0$, PRG reduces to SRG because $\bar{A}$ turns into the scalar with value $1$ %becomes the identity matrix 
in this case. Therefore, PRG is a proper extension of SRG.
\end{remark}

Several important properties of the PRG are described in the following proposition.
\begin{prop}\label{prop:1}
The PRG formulation is recursively feasible, bounded-input and bounded-output stable (BIBO), and for a constant $r$, $v(t)$ converges. 
\end{prop}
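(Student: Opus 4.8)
The plan is to establish the three claims in sequence, mirroring the structure of the corresponding proofs for SRG but accounting for the richer internal state $v_N(t)\in\mathbb{R}^{N+1}$ and its nonstatic dynamics $\bar A$.

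First I would prove \emph{recursive feasibility}. The key observation is that $\kappa=0$ is always a feasible point of the LP in \eqref{eq: LP to compute kappa}, provided $(x(t),\bar A v_N(t-1))\in O_\infty^N$. So the induction hypothesis to carry forward is: at each time $t$, the pair $(x(t),\bar A v_N(t-1))$ lies in $O_\infty^N$. The base case follows from choosing $v_{N_0}$ admissible (or from the standing assumption that the initial condition is in $O_\infty^N$). For the inductive step, suppose $(x(t),v_N(t))\in O_\infty^N$ after the update \eqref{eq:kapa}. By the definition of $O_\infty^N$ in \eqref{eq:O_inf for preview}, running the system one step with input sequence generated by $v_N(t+1)=\bar A v_N(t)$ keeps the output in $\mathbb{Y}$ for all future times; since $(x(t+1),\bar A v_N(t))$ is exactly the state reached after one such step, it lies in $O_\infty^N$. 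This is the shift-invariance property of the MAS, and it is what makes $\kappa=0$ feasible at time $t+1$. Hence the LP is feasible for all $t$.

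Next I would prove \emph{BIBO stability}. Since $G(z)$ is asymptotically stable by assumption and $\bar A$ in \eqref{eq: A_bar_def} has all eigenvalues inside or on the unit circle (in fact $\bar A$ is nilpotent on the first $N$ coordinates with a single eigenvalue at $1$ corresponding to the ``held-constant'' tail), I need to show $v(t)$ — equivalently $v_N(t)$ — stays bounded whenever $r_N(t)$ is bounded. The update \eqref{eq:kapa} is a convex combination: $v_N(t)=(1-\kappa)\bar A v_N(t-1)+\kappa\, r_N(t)$ with $\kappa\in[0,1]$. Because $\bar A$ is (entrywise) a selection/shift matrix, $\|\bar A v_N(t-1)\|_\infty\le \|v_N(t-1)\|_\infty$, so by induction $\|v_N(t)\|_\infty\le \max\{\|v_N(t-1)\|_\infty,\|r_N(t)\|_\infty\}\le \max\{\|v_N(0)\|_\infty,\sup_\tau\|r_N(\tau)\|_\infty\}$. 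Thus $v(t)$ is bounded; feeding a bounded input into the asymptotically stable $G$ yields a bounded state $x(t)$ and bounded output $y(t)$, giving BIBO stability. (One can also note $y(t)\in\mathbb{Y}$ directly bounds the output, but the argument above additionally bounds the internal governor state.)

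Finally, \emph{convergence of $v(t)$ under constant $r$}. Suppose $r(t)=r$ for all $t$, so $r_N(t)=\mathbf 1\, r$, which is a fixed point of $\bar A$ (i.e.\ $\bar A\mathbf 1 r=\mathbf 1 r$). Define $e_N(t)=v_N(t)-\mathbf 1 r$. Subtracting $\mathbf 1 r$ from \eqref{eq:kapa} and using $\bar A\mathbf 1 r=\mathbf 1 r$ gives $e_N(t)=(1-\kappa(t))\,\bar A\, e_N(t-1)$. Since $\|\bar A e_N(t-1)\|_\infty\le\|e_N(t-1)\|_\infty$ and $1-\kappa(t)\in[0,1]$, the sequence $\|e_N(t)\|_\infty$ is nonincreasing, hence convergent to some limit $\ell\ge 0$; I must show $\ell=0$, or at least that $v(t)$ (the first component) converges. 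The monotone sequence $v_N(t)$ in the appropriate partial order — more carefully, the scalar sequence $\kappa(t)$ — forces the first component to settle: after $N$ steps the tail of $v_N$ is constant, and the standard SRG argument applies to show that $\kappa(t)$ cannot stay bounded away from a value that would keep $v_N$ moving without eventually violating the MAS constraint or reaching the fixed point. Concretely, I would argue that since each $\kappa(t)\in[0,1]$ and the LP maximizes $\kappa$, and since $\mathbf 1 r\in O_\infty^N$ would give $\kappa=1$ immediately, either $v_N$ reaches $\mathbf 1 r$ in finite time, or it converges to the boundary point of the segment $[\bar A v_N(t-1), r_N(t)]$ that lies on $\partial O_\infty^N$; in the latter case a limiting/continuity argument on the (closed) set $O_\infty^N$ shows the increments $v_N(t)-\bar A v_N(t-1)$ tend to zero, and since $\bar A v_N(t-1)-v_N(t-1)\to 0$ as well (the tail converges), $v_N(t)$ is Cauchy, so $v(t)$ converges.

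The main obstacle is the last part: unlike in plain SRG, the internal dynamics $\bar A$ are not the identity, so $v_N(t)$ is not simply monotone along a ray toward $r$, and one must simultaneously control (i) the shift dynamics driving the tail of $v_N$ to a constant and (ii) the convex-combination dynamics pulling $v_N$ toward $r_N$. I expect the cleanest route is to first let the tail settle (this happens in at most $N$ steps of ``frozen'' behavior, or one shows it converges), reducing the problem to the effectively scalar SRG-style recursion on $v(N)$, and then invoke the known SRG convergence argument — using closedness of $O_\infty^N$ and the fact that the LP always takes the largest admissible step — to conclude.
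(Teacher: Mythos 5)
Your arguments for recursive feasibility and BIBO stability are correct and essentially identical to the paper's: feasibility of $\kappa=0$ follows from positive invariance of $O_\infty^N$ under the assumed dynamics $v_N(t+1)=\bar A v_N(t)$, and boundedness of $v_N$ follows from the convex-combination structure of \eqref{eq:kapa} together with asymptotic stability of $G$ (your explicit $\infty$-norm bound is a slightly more careful version of what the paper states).

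The convergence part, however, has a genuine gap. Your first attempt—showing $\|v_N(t)-\mathbf 1 r\|_\infty$ is nonincreasing—cannot close the argument: the limit $\ell$ is in general strictly positive (e.g.\ when $r$ is not steady-state admissible, $v$ must converge to something other than $r$), and a nonincreasing distance to a fixed point does not imply that $v_N(t)$ itself converges. Your fallback ("the increments $v_N(t)-\bar A v_N(t-1)$ tend to zero by a limiting/continuity argument on the closed set $O_\infty^N$, hence $v_N$ is Cauchy") is unsupported—nothing in the LP \eqref{eq: LP to compute kappa} forces the increments to vanish a priori, and the claim $\bar A v_N(t-1)-v_N(t-1)\to 0$ presumes the convergence you are trying to prove. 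What is missing is the structural observation the paper relies on: with $r_N(t)\equiv\mathbf 1 r$ and $\bar A$ as in \eqref{eq: A_bar_def} (shift with repeated last row), the update \eqref{eq:kapa} gives $v_i(t)=(1-\kappa(t))v_{i+1}(t-1)+\kappa(t)r$ for $i<N$ and $v_N(t)=(1-\kappa(t))v_N(t-1)+\kappa(t)r$, so after one step the last two components of $v_N$ are equal, after two steps the last three, and after at most $N$ steps \emph{all} components are equal, irrespective of the realized $\kappa(t)$'s (this is not "frozen" $\kappa=0$ behavior, as your parenthetical suggests). From then on the common value obeys the scalar recursion $w(t)=(1-\kappa(t))w(t-1)+\kappa(t)r$, which is monotone toward $r$ and bounded by $r$, hence convergent; this is exactly the paper's statement that every element of $v_N(t)$ is monotonic for $t\geq N$ and bounded by $r$. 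You gesture at this reduction in your closing paragraph but never establish the equalization of components, which is the step that makes the scalar SRG-style argument applicable.
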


\begin{proof}
To show recursively feasibility, consider the update law \eqref{eq:kapa}. As can be seen, $\kappa=0$ implies that $v_N(t)=\bar{A}v_N(t-1)$, which matches the  dynamic of $v_N$ that is assumed in $O_\infty^N$ (see \eqref{eq:O_inf for preview}). Positive invariance  of $O_\infty^N$ implies that if the pair $(x(t-1),v_N(t-1))$ is admissible, then $(x(t),v_N(t-1))$ is also admissible. In conclusion, $\kappa=0$ is always a feasible solution to the LP in \eqref{eq:RG_kappa}, implying recursively feasibility of the PRG. As for BIBO stability, recall that $v_N(t)$ is the convex combination of $\bar{A}v_N(t-1)$ and the current reference $r_N(t)$. Thus, if $r(t)$ is bounded, then so is $v_N(t)$. This, together with the asymptotic stability of $G$, implies BIBO stability of the system.  To prove the convergence property, assume $r(t) = r, \forall t\in \mathbb{Z}_+$. From \eqref{eq:kapa}, it can be shown that every element in $v_N(t)$ is monotonic $\forall t \geq N$ and bounded by $r$. Thus, $v_N(t)$ must converge to a limit. Note that, since the closed-loop system $G$ is asymptotically stable, $x(t)$ and $y(t)$ would also converge.
\end{proof}

\begin{remark}
Note that the definition of $v_N$ in \eqref{eq: def of v_bar} is consistent with the delay structure in the dynamics \eqref{eq: A_bar}--\eqref{eq: A_bar_def}. However, the definition in \eqref{eq: def of v_bar} may appear inconsistent with the update law in \eqref{eq:kapa} because the delay structure vanishes when ${\kappa \neq 0}$. We remark that this is not an error. The definition in \eqref{eq: def of v_bar} and the dynamics in \eqref{eq: A_bar}--\eqref{eq: A_bar_def} are only introduced to construct $O_\infty^N$. For real-time implementation, $v_N$ should not be thought of as defined by \eqref{eq: def of v_bar}, but instead should be thought of as the PRG's internal states. 
\end{remark}

\begin{remark}
For the case where the state  of $G(z)$ (i.e., $x(t)$ shown in Figure \ref{fig: preview Governor scheme}) is not measured, standard Luenberger observers can be designed to estimate the state. 
\end{remark}

\begin{figure}
\centering
\begin{tikzpicture}[arrowmark/.style 2 args={decoration={markings,mark=at position #1 with \arrow{#2}}}]
	 \draw[->](-2,0)-- (2,0);
	 \draw[->](0,0)--(0,-2);
      \fill[pattern=north west lines, pattern color=black] (-1.5,0) rectangle (1.5,0.5);
    \node (a) at (0, 0) {};
    \node (p) at (2,-2) {};
  \draw[black,rounded corners]
    let \p1=($(a)!-1mm!(p)$),
        \p2=($(p)!-1mm!(a)$),
        \p3=($(\p1)!2mm!90:(\p2)$),
        \p4=($(\p1)!2mm!-90:(\p2)$),
        \p5=($(\p2)!2mm!90:(\p1)$),
        \p6=($(\p2)!2mm!-90:(\p1)$)
    in
    (\p3) -- (\p4)-- (\p5) -- (\p6) -- cycle;
 
	 \draw (2.5,0)node
      [color=black,font=\fontsize{12}{12}\selectfont]{$Y$};
      
	 \draw (0,-2.5)node
      [color=black,font=\fontsize{12}{12}\selectfont]{$X$};
      
     \draw[black,fill=white] (0,0) circle (0.2);
     
     \draw[dashed](0,0)-- (3,-3);
     
     %\draw (0,-1) parabola (0.7,-0.7);
	 \draw (0.4,-1.7)node
      [color=black,font=\fontsize{12}{12}\selectfont]{$\theta$};
      
      \draw[
    postaction={decorate},
    arrowmark={0.5}{>},
    ](0,-1.5)to[bend right=45](0.87,-0.87);
      \draw[
    postaction={decorate},
    arrowmark={1}{>},
    ](0.1,-0.7)to[bend right=45](0.7,-0.1);
    
	 \draw (0.85,-0.3)node
      [color=black,font=\fontsize{12}{12}\selectfont]{$\tau$};
      
      \draw[->](2.8,-0.5)--(2.8,-1.5);
	 \draw (3,-1)node
      [color=black,font=\fontsize{12}{12}\selectfont]{$g$};   
\end{tikzpicture}
\caption{One-link arm.}
\label{fig: One-link arm}
\end{figure}

We now illustrate the PRG using a numerical simulation of a one-link arm robot, shown in Figure~\ref{fig: One-link arm}. A state-space model of the arm is given by:
% induced during these maneuvers.  It is however, possible to prevent such a rollover incident by monitoring the car dynamics and applying proper steering control ahead of time. The work in \cite{solmaz2007methodology} proposes a model for rollover prevention where the input is the steering angle and the performance output is the so-called Load Transfer Ratio $LTR_d$, which is related to tire lift-off and can be considered as an early indicator of impending vehicle rollover. Suppose that, thanks to the advancements in sensor technologies, car peripheral sensors and GPS can be used to obtain preview information of the driver’s steering input. Thus, this information can be incorporated in the control design.
% As presented in \cite{solmaz2007methodology}, the equation of motion corresponding to vehicle model is as follows:
\begin{equation}\label{eq: model for robot}
\begin{aligned}
&\begin{bmatrix}\dot{x_1}\\ \dot{x_2}
\end{bmatrix}=
\begin{bmatrix}0 & 1\\ 
-14.7 & 0
\end{bmatrix}
\begin{bmatrix}x_1\\ x_2
\end{bmatrix} 
+  \begin{bmatrix}0\\ 
3
\end{bmatrix}\tau, \quad y=\begin{bmatrix}1 & 0
\end{bmatrix} \begin{bmatrix}x_1\\ x_2
\end{bmatrix} 
 \end{aligned}
\end{equation}
where $x_1 \triangleq \theta$, $x_2 \triangleq\dot{\theta}$ are the states and $\tau$ (i.e., external torque) is the control input. %The mass (i.e., $m_r$) and the link length (i.e., $a$) are both selected to be $1$. 
For this example, we assume that both states are  measured (if not, an observer can be designed). 

% The output, $y$, represents the $LTR_d$. As shown in \cite{solmaz2007methodology}, to prevent rollover, $LTR_d$ should be within $[-1,1]$.

In order to design a controller and consequently implement the PRG, the system model~\eqref{eq: model for robot} is first discretized at $T_s=0.01s$.  Then, a state-feedback controller with a pre-compensator is applied  to ensure that the output, $\theta$, tracks the desired angle, $v$, perfectly, that is:
$
\tau=66.67v-
61.77 x_1 - 9.64 x_2.
$
This results in the closed-loop system $G$ shown in Figure~\ref{fig: preview Governor scheme}. 
We simulate a trajectory-following maneuver, wherein we use the PRG to ensure that the output, $\theta$, remains within $[-45^{\circ}, 45^{\circ}]$. The preview horizon, $N$, is chosen to be $25$ (i.e., $0.25$ seconds). We assume that the preview information is available from the given pre-set trajectory.

The comparison between the performance of PRG and SRG is shown in Figure \ref{fig: previewRG on robot}. 
As can be seen, $v(t)$ given by PRG is closer to $r(t)$ (less conservative) when $t\in[0.3,0.5]$. This is because when $t=0.3$, the PRG has the future information that the reference would drop down to $0$ in future $25$ time steps so that it allows a larger $v(t)$ than SRG. For the same reason, $v(t)$ given by PRG is less conservative when $t \in [0.7,0.9]$.

\begin{figure}[t]
\centering
\includegraphics[scale=0.28]{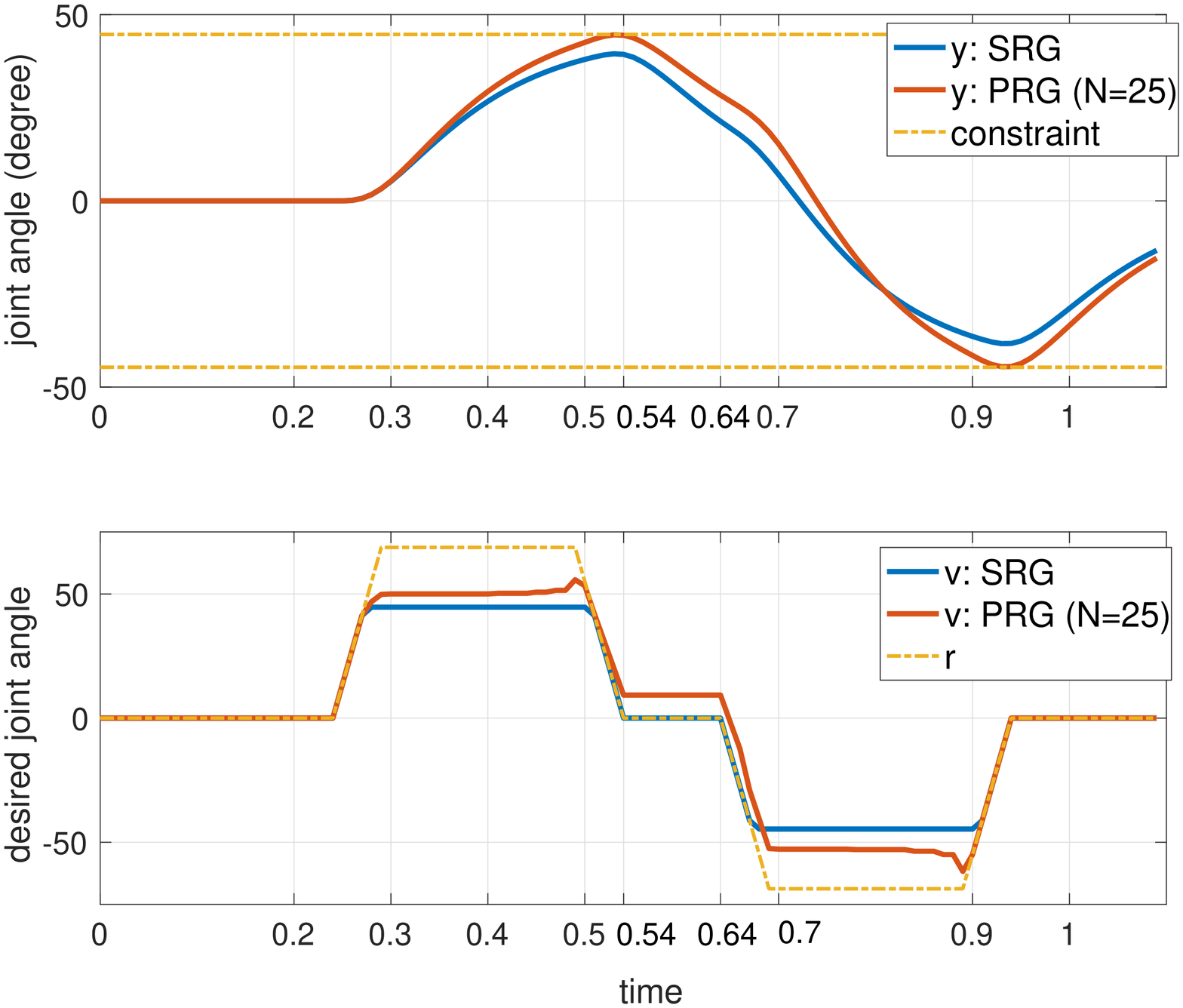}
\caption{Comparison of PRG and SRG. The blue  lines represent the results of SRG and the red lines refer to the results of PRG. The top plot shows the outputs and the bottom plot shows the control inputs and the setpoint.}
\label{fig: previewRG on robot}
\end{figure}

Figure \ref{fig: previewRG on robot} shows a limitation of PRG as well. Specifically, it can be seen that when $t\in[0.54,0.64]$,  $v(t)$ given by PRG  can not reach $r(t)=0$, even though $0$ is an admissible input. To explain the root cause of this behavior, note that when $t=0.54$, the preview information available to the PRG is that $r(t)$ drops down to $-69^{\circ} $ at $t=0.7$ and stays constant  afterwards (the increase of $r(t)$ back to 0 at $t=0.9$ is beyond the preview horizon and unavailable to PRG). To enforce the lower constraint for $t>0.7$, the PRG calculates a $\kappa$ smaller than 1. However, since $\kappa$ affects all elements of $v_N(t)$ (see  \eqref{eq:kapa}), this leads to a $v(t)$ that is different from $r(t)$ when $t \in [0.54,0.64]$, leading to a conservative solution.

Of course, the above limitation of PRG can  be addressed by decreasing the preview horizon $N$. However, if $N$ is too short, the tracking performance of the system might not be improved significantly as compared with SRG. In order to have the superior system performance while  addressing the above limitation of PRG, we provide an extension of PRG  in the following section.

\section{Multi-Horizon PRG (Multi-$N$ PRG)}\label{sec: multi-horizon PRG}
To overcome the limitation of PRG described in the previous section and improve performance, this section introduces the Multi-Horizon Preview Reference Governor (Multi-$N$ PRG). As the name suggests, instead of having a single preview horizon, multiple preview horizons are considered. For each horizon, a MAS is characterized, and multiple PRGs are solved at each time step, one for each MAS. A technical challenge for this approach is that $v_N(t)$'s for different preview horizons would have different dimensions and different interpretations. In addition, a practical challenge is that storing multiple MASs may lead to a significant increase in the memory requirements. Our novel solution overcomes these challenge by unifying the $v_N$'s so that only one MAS is required, and fusing the PRG solutions in a special way to guarantee constraint satisfaction and recursively feasibility.

To begin, consider the following set of $q$ preview horizons: $N_1 < N_2<\ldots<N_q$. Let  $O_\infty^{N_i}$ be defined as in \eqref{eq:O_inf for preview}, i.e., the MAS of the lifted system with preview horizon $N_i$. We first show that there exists a simple relationship between $O_\infty^{N_i}$, $i=1,\ldots, q-1$, and $O_\infty^{N_q}$. Recall from  \eqref{eq:O_inf for preview} and \eqref{eq: def of v_bar} that in order to construct $O_\infty^{N_i}$, it is assumed that $v(t)$ is held constant beyond the preview horizon (i.e., after ${t=N_i}$). Similarly, to construct $O_\infty^{N_q}$, it is assumed that $v(t)$ is held constant after ${t=N_q}$. Thus, $O_\infty^{N_i}$ and $O_\infty^{N_q}$ have the following relationship:
$$
(x,v_{N_i}) \in O_\infty^{N_i} \Leftrightarrow
(x, [v_{N_i}^T, \underbrace{v_{N_i}(N_i+1)^T, \ldots, v_{N_i}(N_i+1)^T}_{N_q-N_i \text{    terms}}]^T) \in O_\infty^{N_q}
$$
where $\Leftrightarrow$ denotes the bidirectional implication and ${v_{N_i}(N_i+1)}$ is the last element in $v_{N_i}$. Therefore, given this  relationship, we only need to compute and store  $O_\infty^{N_q}$.

We now introduce the Multi-$N$ PRG, which  solves $q$ PRGs at each time step, one for each $N_i$. All PRGs use the same MAS, namely $O_\infty^{N_q}$, as justified above. To provide more details, recall from Section \ref{sec: preview RG} that PRG contains an internal state with an update law given by \eqref{eq:kapa}. Similarly, we introduce an internal state, denoted here by $\widetilde{v}\in\mathbb{R}^{(N_q+1)}$, for the Multi-$N$ PRG. PRG $i$ in the Multi-$N$ PRG framework assumes the update law:
\begin{equation}\label{eq:updatemulti}
\widetilde{v}(t)=M_i (\bar{A}\widetilde{v}(t-1)+\kappa_i(r_{N_q}(t)-\bar{A}\widetilde{v}(t-1)) \end{equation}
where $\bar{A}$ is defined the same as \eqref{eq: A_bar_def} but with $N=N_q$, and  $r_{N_q}(t)$ is the lifted version of $r$ at time $t$ (defined as in \eqref{eq: def of v_bar}). The matrix $M_i$ is introduced above to force the control inputs beyond the preview horizon $N_i$ to be constant (this is to maintain consistency with the fact that PRG $i$ has a preview horizon of length $N_i$). The matrix $M_i$ that achieves this is:
\[M_i = \begin{bmatrix}
     I_{(N_i+1)} & 0_{(N_i+1) \times (N_q-N_i)}\\
     \widetilde{I}_{(N_q-N_i)\times (N_i+1)} 
     & 0_{(N_q-N_i) \times (N_q-N_i)}
    \end{bmatrix}\]
where $\widetilde{I}_{(N_q-N_i)\times (N_i+1)}$ is a matrix with zeros everywhere except the rightmost columns, which are given by $[1,1,\ldots,1]^T$. In the update law \eqref{eq:updatemulti}, for  PRG $i$, the scalar $\kappa_i$ is computed by solving the following LP:
\begin{equation}\label{eq: kappa for multi kappa}
\begin{aligned}
&\underset{\kappa_i \in [0,1]}{\text{maximize}}
\quad  \mathrm{\kappa_i} \\
& \hspace{10pt} \text{s.t.}
\quad \widetilde{v}(t)= M_i ( \bar{A}\widetilde{v}(t-1)+\kappa_i(r_{N_q}(t)-\bar{A}\widetilde{v}(t-1))\\
& \quad \quad\quad (x(t),  \widetilde{v}(t)) \in O_\infty^{N_q}
\end{aligned}
\end{equation}
 %Basically, $\Phi_i$ transfer $v_{N_i}$ to a vector with a higher dimension $\mathbb{R}^{(N_q+1)m}$, where the first $(N_i+1)m$ elements are equal to $v_{N_i}(t)$ and the last $(N_q-N_i)m$ elements are all equal to $v(t+N_i)$, similar as shown in \eqref{eq: multi-N def Oinf2}.

\noindent To fuse the solutions of the PRGs,  the maximum value among $\{ \kappa_i \}, i=1,\ldots,q$ is selected, denoted by $\kappa_{i^*}$. The index that corresponding to $\kappa_{i^*}$ is denoted by $i^*$. Then, the final update law of Multi-$N$ PRG is obtained by fusing the PRGs as follows:
\[ \widetilde{v}(t)=M_{i^*} (\bar{A}\widetilde{v}(t-1)+\kappa_{i^*}(r_{N_q}(t)-\bar{A}\widetilde{v}(t-1))\] 
This formulation maintains recursively feasibility. Indeed, suppose PRG $i^*$ is the PRG that computes $\kappa_{i^*}$ at time step $t$. Then, the same PRG can calculate a feasible solution at time $t+1$, due to the recursively feasibility of the PRG formulation, which was proven in Proposition \ref{prop:1}.

Note that not all PRGs in the Multi-$N$ PRG formulation will have feasible solutions at every time step. If a feasible solution to these PRGs does not exist, $\kappa_i$'s for these PRGs are set to be $0$. Note also that even though multiple preview horizons are considered in Multi-$N$ PRG, only $O_\infty^{N_q}$ is required to be computed and stored offline, implying that if $N_q$ for Multi-$N$ PRG is equal to $N$ for PRG, then the memory requirements of the two formulations are the same. We will study the processing requirements in the next section. 
%The reason for that only $O_\infty^{N_q}$ needs to be stored can be explained as follows: given the condition that in $v_{N_i}$, \mbox{$v(t+j)=v(t+N_i),\forall j\geq N_i$}, the MAS for the system with $N_i$ as the preview horizon can be represented as:
% \begin{equation}\label{eq: O_inf for multi-N 1}
%  O_\infty^{N_i} = \{(x_0,v_{N_i_0})\in \mathbb{R}^{n+(N_q+1)m}:
% H_x^{N_i} x_0 + H_v^{N_q} v_{N_i} \leq h^{N_i} \}
% \end{equation}
% Since $H_x^{N_i}=[CA^t]$ and $h^{N_i}=[s,\ldots,s]^T$, they are independent on $N_i$ so we simply represent them by $H_x$ and $h$. Then, \eqref{eq: O_inf for multi-N 1} can also be represented as:
% \[O_\infty^{N_i} = \{(x_0,v_{N_i_0})\in \mathbb{R}^{n+(N_q+1)m}:
% H_x x_0 + H_v^{N_q} v_{N_i} \leq h \]
% Thus, $O_\infty^{N_q}$ can be used to calculate $\kappa_i$ as shown in \eqref{eq: kappa for multi kappa}.

The simulation results  of Multi-$N$ PRG on the one-link arm robot  example are shown in Figure \ref{fig: multi kappa previewRG}. For comparison, the results of the implementation of PRG with $N=25$ is also provided (this is the same as Figure \ref{fig: previewRG on robot}). For the sake of illustration, we consider the extreme case where the Multi-$N$ PRG uses all preview horizons between 0 and 25; i.e., $q=26$ and $N_1 = 0$, $N_2 = 1$, \ldots , $N_{26} = 25$.

\begin{figure}[t]
\centering	
\includegraphics[scale=0.28]{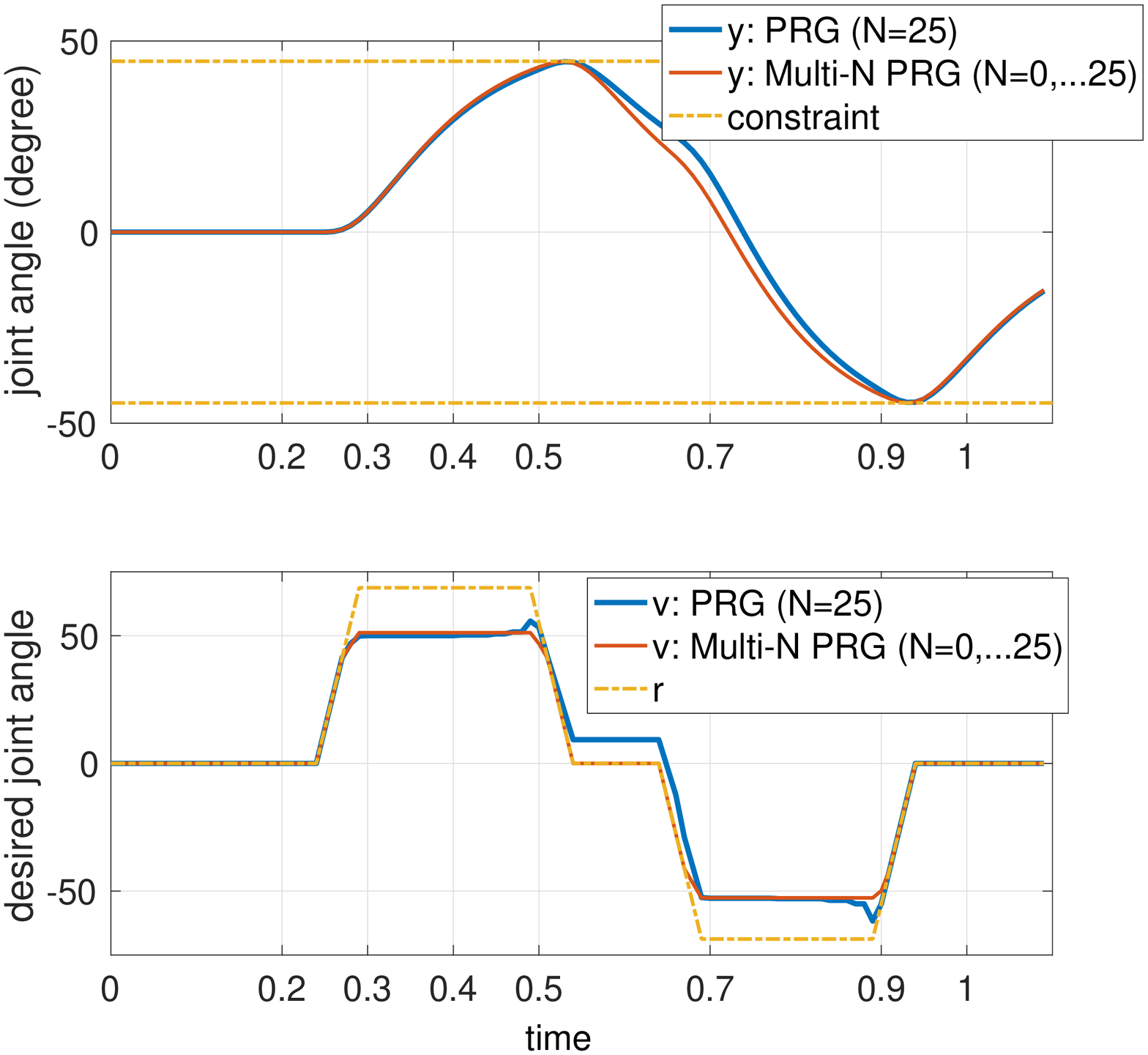}
\caption{Comparison of Multi-$N$ PRG and PRG with $N=25$. %The red lines represent the simulation results of PRG and the blue lines are the simulation results of multi-$N$ PRG. 
%The top plot shows the outputs and the bottom plot shows the control inputs and the setpoint.
}
\label{fig: multi kappa previewRG}
\end{figure}

From Figure \ref{fig: multi kappa previewRG}, the outputs for both Multi-$N$ PRG and PRG satisfy the constraints, as expected. However, from the bottom plot of Figure \ref{fig: multi kappa previewRG}, it can be seen that when ${t\in [0.54,0.64]}$, $v(t)$ given by Multi-$N$ PRG reaches $r(t)$  while $v(t)$ computed by PRG is above $r(t)$. The reason for this behavior is that when $t\in [0.54,0.64]$, the PRG corresponding to $N_1=0$ computes $\kappa=1$, which leads to $v(t)=r(t)$. Note that the actual performance improvement (as seen on the output plot) is not large for this specific example but it could be large in general. 

% Next, we will discuss the impact of the value of $q$ in the Multi-$N$ PRG on the system performance. The larger the $q$ is, the more possibility to capture the $\kappa_i$ that gives a better performance. However, the computation time would increase with a larger $q$. Thus, depending on specific requirements, a proper $q$ can be chosen.

Next, we will discuss the impact of the value of $q$ in the Multi-$N$ PRG on the system performance. Consider two extreme cases for the sake of illustration: first, $N_1$ and $N_2$ are chosen to be $0$ and $100$ (i.e., $q=2$); second, $N_1,\ldots,N_{100}$ are chosen to be all numbers from $0$ to $100$ (i.e., $q=101$). The reason  $N_q=100$ is selected is that the preview horizon in this case is $1s$ (recall the sample time is $T_s=0.01$), which is longer than the variation of the reference in our example, and can clearly show the difference of the system performance between the two cases. The comparison between the two cases implemented on the one-link arm robot example is shown in Figure~\ref{fig: comparison of q}, which demonstrates performance improvements when a larger $q$ is used. Generally, when the preview horizon is longer than the expected variations of the reference, it is better to use the Multi-$N$ PRG formulation with a larger $q$. However, this leads to an increase in the computational burden, which is discussed next. %Note that in this example, the constraints set is $[-30,45]$  to clearly show the differences between the two cases.

%More specifically, from Figure \ref{fig: comparison of q}, it can be seen that when $t\in [0.3,0.5]$, $v(t)$ given by Multi-$N$ PRG with $q=101$ is higher than $v(t)$ given by Multi-$N$ PRG with $q=2$ (despite the fact that the total preview horizon, $N_q$, is the same in both cases). The reason for this behavior can be explained as follows: when $t=0.3$, for the Multi-$N$ PRG with $q=101$, the maximum $\kappa$ (i.e., $\kappa_{i^*}$) is calculated based on $O_\infty^{N_{i^*}}$, where $N_{i^*}=24$. This means that, when $t=0.3$, $v(t)$ is calculated based on the future information that $r(t)$  would drop down to $0$ at $t=0.54$ and stay constant afterwards. On the other hand, for the Multi-$N$ PRG with $q=2$, $\kappa_{i^*}$ is calculated by $O_\infty^{N_q}$, where $N_q=100$. This means that the entire preview horizon is available, which leads to a smaller $\kappa_{i^*}$ to enforce the lower constraint when $t>0.9$, which in turn leads to a lower $v$ at $t=0.3$. In conclusion, when the preview horizon is generally longer than the expected variations of the reference, it is better to use the Multi-$N$ PRG formulation with a larger $q$.

\begin{figure}[t]
\centering	
\includegraphics[scale=0.28]{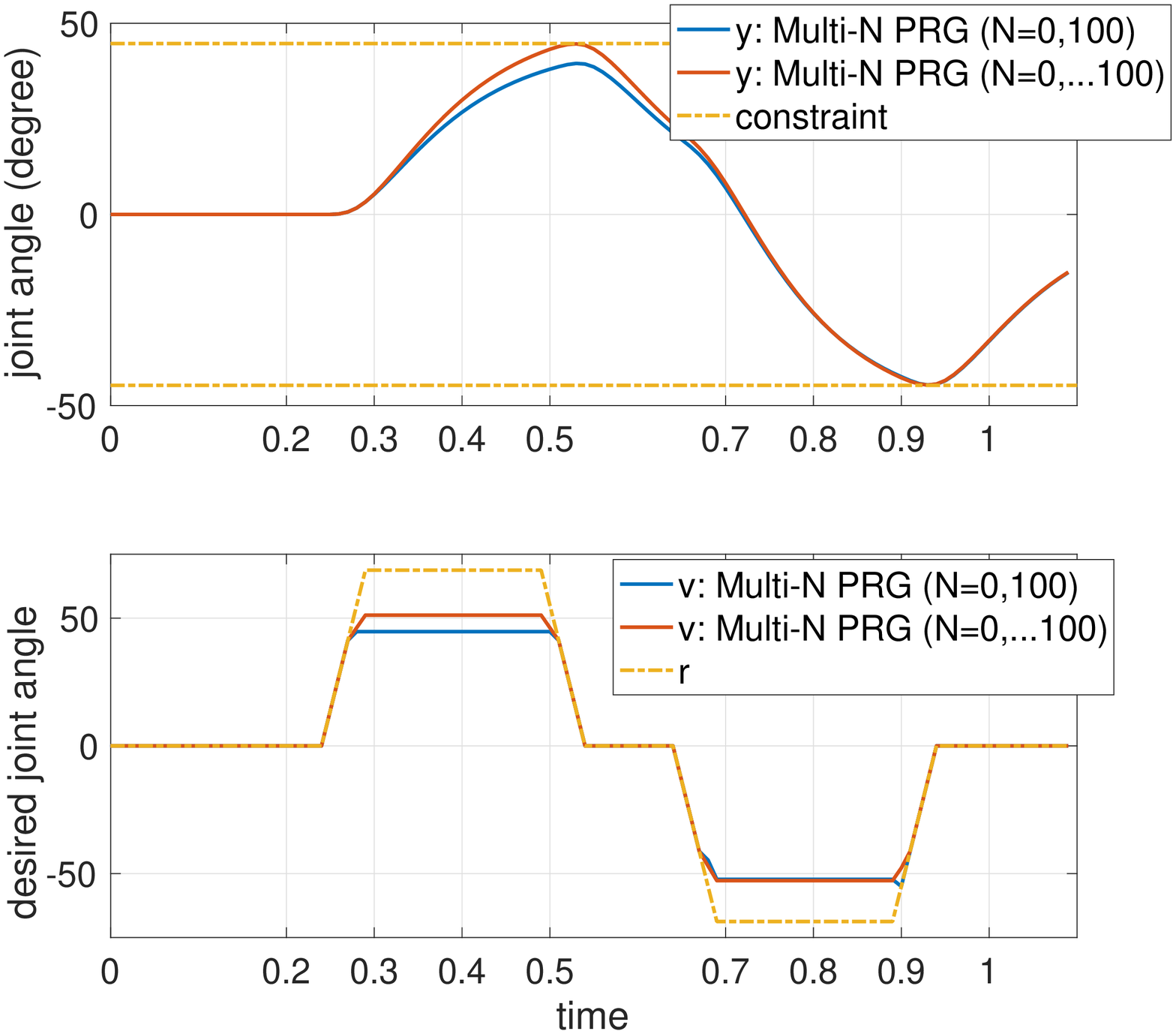}
\caption{Comparison of Multi-$N$ PRG with $q=2$ and $q=101$. %The top plots are the outputs and the bottom plots are the control inputs and the setpoint.
}
\label{fig: comparison of q}
\end{figure}

\section{Computational considerations}\label{sec:computation}
In this section, we provide a comparison between the computation time of PRG, Multi-$N$ PRG, and standard SRG by  simulating the one-link arm robot example with all three methods. Recall that SRG and PRG require the solution to one linear program (LP) at each time step, while Multi-$N$ PRG requires the solutions to $q$ LP's ($q=26$ for our example). However, we do not use generic LP solvers to solve them. Instead, we use the Algorithm presented in \cite{liu2018decoupled} to solve the LPs in SRG and Algorithm \ref{Algorithm:algorithm1} below to solve the LPs in PRG. The LPs in  Multi-$N$ PRG can be solved using a similar algorithm. The notation in Algorithm \ref{Algorithm:algorithm1} is as follows. It is assumed that $O_\infty^{N}$ is finitely determined and given by polytopes of the form \eqref{eq:definition of O_inf polytope}. In addition, $j^*$ denotes the number of rows of $H_x$, $H_v$, and $h$.% See Algorithm \ref{Algorithm:algorithm1} for details.

\begin{algorithm}
 \caption{Custom Explicit PRG Algorithm}
 \begin{algorithmic}[1]
 \STATE let $a=H_{v}(r_N(t)-\bar{A}v_N(t-1))$
 \STATE let $b=h-H_{x}x(t)-H_{v}\bar{A}v_N(t-1)$
 \STATE set $\kappa=1$
  \FOR {$i = 1$ to $j^*$}
  \IF {$b(i) < 0$}
  \STATE $\kappa=0$
  \ELSE
    \IF {$a(i) > 0$}
    \STATE $\kappa=\min(\kappa,b(i)/a(i))$
  \ENDIF
  \ENDIF
  \ENDFOR
  \STATE $\kappa=\max(\kappa,0)$
 \end{algorithmic} 
 \label{Algorithm:algorithm1}
 \end{algorithm}

We simulate the one-link arm robot example with all three methods, namely SRG, PRG, and Multi-$N$ PRG. All simulations were performed for 150 time steps in Matlab on an Apple Macbook Pro with $1.4$ GHz Intel Core i5 processor and $8$ GB memory. In order to eliminate the effects of background processes running on the computer, each of the above experiments were run $10$ times and the averages are calculated.  We calculate the per-timestep averages and maximums of each of the three methods. The results are shown in Table  \ref{table: computation time}. 
As can be seen, PRG runs two orders of magnitude slower than SRG (because the matrices that arise in the computations are larger). The Multi-$N$ PRG is slower by one  order of magnitude (because $q$ PRGs are solved at each time step). 

Finally, to provide a comparison of these computation times with those of other existing preview control methods, we simulate the one-link arm robot example with the PRG replaced by a Command Governor (CG). Similar to MPC, a CG solves a Quadratic Program (QP) at each time step to directly optimize over $v(t)$. In addition, similar to MPC, it can incorporate preview information into the cost function. In brief, our CG solves the following QP:
\begin{align*}
\text{minimize}\quad
& (r_N(t)-v_N(t))^TQ(r_N(t)-v_N(t)) \\
\text{s.t.} \quad
&  (x(t),v_N(t)) \in O_\infty^N
\end{align*}
where $Q=Q^T>0$. We simulate the system with above CG. The QP above is solved at every time step using explicit Multi-Parametric Quadratic Programming (MPQP), which is introduced in \cite{tondel2003algorithm}. The computation times of CG are shown in Table \ref{table: computation time}. 
 As can be seen, CG is one order of magnitude slower than Multi-$N$ PRG. Note that we also implemented online QP for the CG with $N=25$, provided by MPT3 Toolbox in Matlab and Gurobi, and found that the computation times for both cases were longer than that of MPQP.

\begin{table}[h]
\renewcommand{\arraystretch}{1.3}
\vspace{0.3cm}
\caption{Comparison of the computation time between SRG, PRG, Multi-$N$ PRG, and CG for one-link arm robot example}
\label{table: computation time}
\centering
\renewcommand{\arraystretch}{0.9}
\begin{tabular}{c|c|c}
\hline
& \bfseries SRG & \bfseries PRG ($N=25$)\\
\hline
avg & $ 6.8\times 10^{-7}$s & $3.1 \times 10^{-5}$s \\
\hline
max & $9.2 \times 10^{-7}$s & $4.74 \times 10^{-5}$s\\
\hline
\hline

& \bfseries Multi-$N$ PRG $(N_q=25)$ & \bfseries CG $(N=25)$\\
\hline
avg & $6.54 \times 10^{-4}$s & $6.3 \times 10^{-3}$ \\
\hline
max & $8.5 \times 10^{-4}$s & $0.0162$ \\
\hline
\end{tabular}
\vspace{-0.25cm}
\end{table}

\section{Robust Preview Reference Governor}\label{sec: robust PRG}
In this section, PRG is extended to systems with disturbance preview, as well as systems with parametric uncertainties and preview information uncertainties. %Essentially, we handle these problems by modifying the MAS to account for the worst-case realization of the disturbances or uncertainties.

\subsection{Preview Reference Governor with Disturbance Preview}
\label{sec: PRG with prevew disturbance}
In previous sections, we considered systems in which  the preview information of the reference signal is available. However, there are situations wherein preview information of disturbance signals may be available as well. For example, in printing systems, the effect of paper (i.e., the disturbance) on the heating or charging systems are known with some preview, since the timing of the paper leaving the printing tray is precisely controlled \cite{ching2010modeling}. Hence, in this section, we consider systems where preview information of disturbances is available within a given preview horizon. For simplicity, we do not consider preview for the reference signal, though the results can be combined with those of the previous sections to consider preview on both references and disturbances.

Consider a system with additive bounded disturbance given by:
\begin{equation}\label{eq: LTI with disturbance}
\begin{aligned}
& x(t+1)=A x(t) + B v(t) + B_w w(t) \\
& y(t)= C x(t) + D v(t) + D_w w(t) 
\end{aligned}
\end{equation}
where $w \in \mathbb{W}$ and $\mathbb{W}$ is a compact polytopic set with origin in its interior. %For the sake of simplicity, we assume that only the current reference, $r(t)$, is known. 
Essentially, we incorporate the disturbance preview into the RG formulation as follows: the maximal admissible set (MAS) is characterized as a function of the current state, input, and the known disturbances within the preview horizon. The set is then shrunk to account for the worst-case realization of the unknown disturbance beyond the preview horizon. Specifically, the robust MAS for systems with  disturbance preview can be defined as:
\begin{equation}\label{eq: O_inf for preview dist}
\begin{aligned} O_\infty^w=& \{(x_0,v_0,w_0,\ldots,w_N): x(0)=x_0,v(0)=v_0, 
 w(i)=w_i, \\ &0\leq i \leq N, y(t) \in  \mathbb{Y}, \forall t \in \mathbb{Z}_+, 
 \forall w(j) \in \mathbb{W}, j> N  \} 
\end{aligned}
\end{equation}
% where 
% \begin{equation}\label{eq: output with disturbance}
% \begin{aligned}
% h_d=\begin{bmatrix}
%       D_w w(0) \\
%       CB_w w(0)+ D_w w(1) \\
%       \vdots\\
%       \sum\limits_{k=0}^{N-2}CA^kB_w w(k)+D_w w(N-1)\\
%       \sum\limits_{k=0}^{N-1}CA^kB_w w(k)+D_w \Bar{w} \\
%       \vdots\\
%       A^{t-N-1}\sum\limits_{k=0}^{N-1}CA^{k}B_w w(k)+(\sum\limits_{p=0}^{t-N}CA^{p}B_w+D_w)\Bar{w} 
%       \\[0.3em]
%      \end{bmatrix}
% \end{aligned}
% \end{equation}
% and $\Bar{w}$ represents the value of the disturbance that corresponding to the worst case.
To compute this set, define $\mathbb{Y}_t=\mathbb{Y}$ for $t=0,\ldots,N$, $\mathbb{Y}_{N+1}=\mathbb{Y} \sim D_w \mathbb{W}$, and $\mathbb{Y}_{t+1}=\mathbb{Y}_t \sim CA^{t-N-1} B_w \mathbb{W}$ for $t \geq N+1$, where $\sim$ represents the Pontryagin subtraction \cite{gilbert1995discrete}. Then, the condition $y(t) \in  \mathbb{Y}$ in \eqref{eq: O_inf for preview dist} can be characterized equivalently by:
$$
CA^t x_0+\left(\sum\limits_{k=0}^{t-1}CA^{k}B+D\right)v_0 + h_d(t) \in \mathbb{Y}_t
$$
where $h_d(t)$ is defined as:
$$
h_d(t)=
\begin{cases}
\sum\limits_{k=0}^{t-1}CA^{t-1-k}B_ww(k)+D_ww(t) \quad \text{if $t \leq N$}\\
\sum\limits_{k=0}^{N}CA^{t-1-k}B_w w(k) \quad \quad \quad\quad\quad\,\,\, \text{if $t > N$}\\
\end{cases} 
$$

Note that if $N=0$, PRG with previewed bounded disturbance reduces to robust SRG with unknown bounded disturbance. In summary,  by shrinking the MAS to take the worst case disturbance into consideration, robust PRG guarantees constraints satisfaction for all values of disturbance beyond the preview horizon.

\begin{prop}\label{prop: robustPRG to disturbances}
The robust PRG formulation to  disturbance previews is BIBO stable, recursively feasible, and for a constant $r$, $v(t)$ converges.
\end{prop}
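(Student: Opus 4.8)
The plan is to follow the structure of the proof of Proposition~\ref{prop:1}, establishing recursive feasibility, BIBO stability, and convergence of $v(t)$ in that order. Throughout, $v(t)$ is generated by an SRG-type update $v(t)=v(t-1)+\kappa(r(t)-v(t-1))$ in which the admissibility constraint of the underlying LP is $(x(t),v(t),w(t),\dots,w(t+N))\in O_\infty^w$, with $O_\infty^w$ the finitely determined inner approximation of the set in~\eqref{eq: O_inf for preview dist}; as usual we assume the initial condition is admissible, i.e.\ $(x(0),v(-1),w(0),\dots,w(N))\in O_\infty^w$.

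For recursive feasibility, I would argue, as in Proposition~\ref{prop:1}, that $\kappa=0$ is always feasible. Taking $\kappa=0$ holds $v(t)=v(t-1)$, which is consistent with the constant-input assumption built into $O_\infty^w$. The only new element relative to Proposition~\ref{prop:1} is the sliding disturbance-preview window: at time $t$ the known disturbances are $w(t),\dots,w(t+N)$, whereas at time $t+1$ they are $w(t+1),\dots,w(t+N+1)$. The key observation is that $(x(t),v(t-1),w(t),\dots,w(t+N))\in O_\infty^w$ guarantees $y(\tau)\in\mathbb{Y}$ for all $\tau\ge t$ and \emph{all} realizations of $w(t+N+1),w(t+N+2),\dots\in\mathbb{W}$; restricting this guarantee to the tail $\tau\ge t+1$, pinning $w(t+N+1)$ at its now-revealed value, and leaving only $w(t+N+2),\dots$ free is a strictly weaker requirement, which is precisely the membership condition $(x(t+1),v(t-1),w(t+1),\dots,w(t+N+1))\in O_\infty^w$. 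The nested Pontryagin-difference construction $\mathbb{Y}_{N+1}=\mathbb{Y}\sim D_w\mathbb{W}$, $\mathbb{Y}_{t+1}=\mathbb{Y}_t\sim CA^{t-N-1}B_w\mathbb{W}$ is exactly the computational realization of this robust positive invariance, so the LP is feasible at every step.

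For BIBO stability, the argument of Proposition~\ref{prop:1} carries over: $v(t)$ lies on the segment between $v(t-1)$ and $r(t)$, so $\|v(t)\|\le\max(\|v(t-1)\|,\|r(t)\|)$ and $v$ is bounded whenever $r$ is; $w(t)\in\mathbb{W}$ is bounded by assumption; and asymptotic stability of $G$ then yields bounded $x$ and $y$. For convergence under constant $r$, note that when $r(t)\equiv r$ the update $v(t)=(1-\kappa)v(t-1)+\kappa r$ with $\kappa\in[0,1]$ keeps $v(t)$ between $v(t-1)$ and $r$ regardless of the value of $\kappa$, and hence regardless of the disturbance realization; thus $\{v(t)\}_{t\ge 1}$ is monotone and bounded by $r$, so it converges. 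Unlike Proposition~\ref{prop:1}, we would not claim convergence of $x$ and $y$, since the disturbance may persist in varying.

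I expect the recursive-feasibility step to be the main obstacle: specifically, verifying rigorously that the finitely determined inner approximation obtained from the $\mathbb{Y}_t$ recursion inherits the robust positive invariance of the exact set $O_\infty^w$ under the one-step shift of the preview window. The steady-state tightening used to render the set finitely determined must be shown not to break this invariance, just as a corresponding check was needed for $O_\infty^N$ in the nominal PRG case.
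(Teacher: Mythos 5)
Your proposal is correct and takes essentially the same route as the paper, whose entire proof is the single line ``Similar to the proof for Proposition~\ref{prop:1}''; you simply supply the detail the paper leaves implicit, namely that membership of $(x(t),v(t),w(t),\ldots,w(t+N))$ in $O_\infty^w$ already covers the shifted preview window with the newly revealed $w(t+N+1)\in\mathbb{W}$ pinned to its actual value, so $\kappa=0$ stays feasible, and the segment/monotonicity arguments give BIBO stability and convergence of $v(t)$. The ``main obstacle'' you flag is not a genuine one: as in the Gilbert--Tan construction, the steady-state tightening used for finite determinism constrains only the constant input $v$, so the finitely determined inner approximation inherits the same robust positive invariance.
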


\begin{proof}
Similar to the proof for Proposition \ref{prop:1}
\end{proof}

We now illustrate the above idea with the one-link arm robot example. The disturbance is assumed to come from the joint torque, i.e. $B_w=B$ and $D_w=0$. We also assume that the disturbance satisfies $w \in \mathbb{W}:=[-0.1,0.1]$. For the purpose of simulations, the disturbance is generated randomly and uniformly from the interval $[-0.1, 0.1]$. The results of PRG with the disturbance as the preview information are shown in Figure~\ref{fig: previewRG disturbance}. Two preview horizons are chosen, namely  $N=20$ and $N=50$ in order to illustrate the relationship between the system performance and the length of the preview horizon. For comparison, Figure~\ref{fig: previewRG disturbance} also shows the response of robust SRG with unknown bounded disturbance (i.e., no preview).

\begin{figure}[t]
\centering	
\includegraphics[scale=0.28]{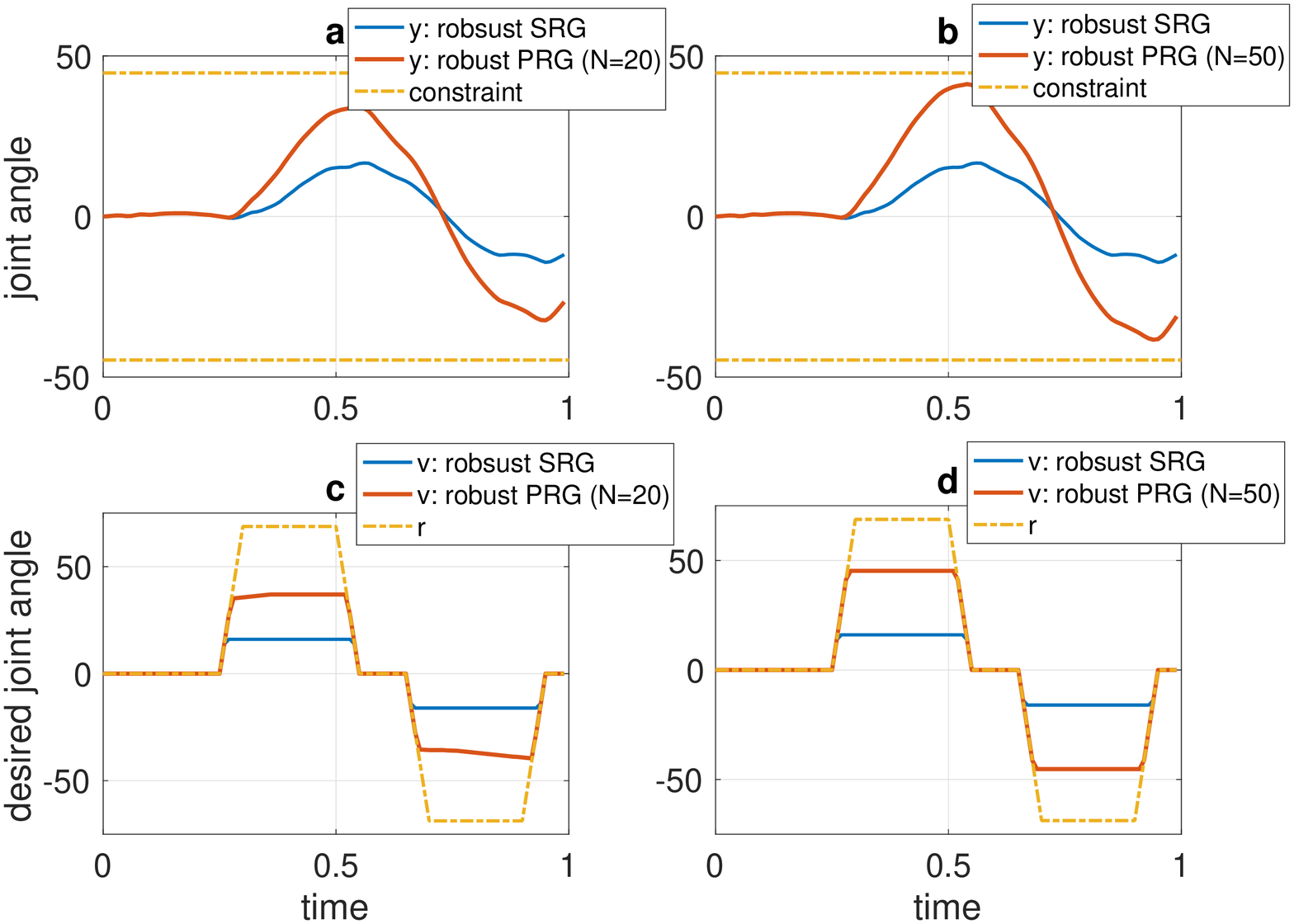}
\caption{Comparison of PRG with  disturbance  preview and SRG with bounded unknown disturbance.}
\label{fig: previewRG disturbance}
\end{figure}

 The following observations can be made. First, the output from all methods satisfy the constraints for all time steps, as expected. Second, it can be seen that $v(t)$ given by PRG is closer to $r(t)$ (less conservative) than that of SRG. The reason %can be explained by \eqref{eq: O_inf for preview dist}. For the case with unknown bounded disturbance, the outputs should satisfy the constraints for all $w(t)\in \mathbb{W}$. However, for the case where the disturbance is previewed, the effect of $w(t),\ldots,w(t+N)$ are known precisely. Thus, the 
 is that the constraint set for PRG with disturbance preview is less conservative than that for SRG with unknown disturbance. Third, the longer the preview horizon, the better the performance (less conservative) becomes.  

%Finally, note that the model uncertainties in the form of polytopic uncertainty on $A$, $B$, $C$, and $D$ matrices can also be naturally incorporated in the PRG framework. This is done using the ideas from \cite{pluymers2005efficient}.
Finally, note that for the case where the future information of both reference and disturbance is available, the methods from Sections \ref{sec: preview RG} and \ref{sec: PRG with prevew disturbance} can be combined to take the preview of reference and disturbance into consideration simultaneously.

\subsection{Robust PRG for systems with Parametric Uncertainties}
Reconsider system $G(z)$ in Figure \ref{fig: preview Governor scheme}, but now with modeling uncertainty on the $A$ and $B$ matrices:
\begin{equation}\label{eq: LTI with uncertainties}
\begin{aligned}
& x(t+1)=A(t) x(t) + B(t) v(t) \\
& y(t)= C x(t) + D v(t)
\end{aligned}
\end{equation}
where the pair $(A(t),B(t))$ is assumed to  belong to an uncertainty polytope defined by the
convex hull of the matrices:
\[ (A(t),B(t))\in  \mbox{conv}\{(A^{(1)},B^{(1)}), \ldots,  (A^{(L)},B^{(L)})\}\]
where $L$ is the number of vertices in the uncertainty polytope. It is shown in  \cite{pluymers2005efficient} that a robust MAS can be constructed for this uncertain system. We note that a similar idea can be implemented for the PRG. Similar to  \eqref{augment state-space system}, we write the dynamics in terms of the lifted input, but now we consider the  uncertainties:
\begin{equation}\label{augment state-space system with uncertainty}
\begin{aligned}
 &   x(t+1) = A(t) x(t) + 
    \underbrace{\begin{bmatrix}
    \setlength\arraycolsep{1pt}
    B(t) & 0& \ldots&0
    \end{bmatrix}}_{\widetilde{B}(t)} v_N(t),\\
&    y(t) = C x(t) + 
    \widetilde{D}v_N(t)
\end{aligned}
\end{equation}
Then, the pair $(A(t),\widetilde{B}(t))$ must belong to a convex hull of the following matrices:
\[ (A(t),\widetilde{B}(t))\in \mbox{conv}\{(A^{(1)},\widetilde{B}^{(1)}),\ldots (A^{(L)},\widetilde{B}^{(L)})\}\]
where  $\widetilde{B}^{(l)}= \begin{bmatrix}
    B^{(l)} & 0& \ldots&0
    \end{bmatrix}, \quad l=1,\ldots,L.$
By doing this, the method in \cite{pluymers2005efficient} can be used to compute the robust MAS for PRG with system uncertainties. For the sake of brevity, we will not provide numerical examples in this section.

\begin{prop}\label{prop: robustPRG to parametric}
The robust PRG formulation described above is  BIBO stable, recursively feasible, and for a constant $r$, $v(t)$ converges.
\end{prop}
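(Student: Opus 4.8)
The plan is to mirror the proof of Proposition~\ref{prop:1}, with the nominal MAS $O_\infty^N$ replaced by the robust MAS that \cite{pluymers2005efficient} constructs for the lifted uncertain system \eqref{augment state-space system with uncertainty}. Throughout, I assume (as is implicit in invoking \cite{pluymers2005efficient}) that the uncertain closed loop is robustly asymptotically stable, so that this robust MAS is well-defined and admits a finitely determined inner approximation, which I again denote $O_\infty^N$.

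For recursive feasibility, I would first observe that the polytopic uncertainty description transfers cleanly to the lifted dynamics: since $\widetilde{B}^{(l)}$ is just $B^{(l)}$ padded with fixed zero columns, convex combinations commute with the padding, so $(A(t),\widetilde{B}(t))\in\mbox{conv}\{(A^{(1)},\widetilde{B}^{(1)}),\ldots,(A^{(L)},\widetilde{B}^{(L)})\}$ whenever $(A(t),B(t))$ lies in the original uncertainty polytope. Hence the robust MAS is positively invariant under \emph{every} admissible realization of $(A(t),\widetilde{B}(t))$ when $v_N$ is propagated by $v_N(t+1)=\bar{A}v_N(t)$. Consequently, if $(x(t-1),v_N(t-1))\in O_\infty^N$, then for whichever matrices the plant realizes one has $(x(t),\bar{A}v_N(t-1))\in O_\infty^N$; and since $\kappa=0$ in the update law \eqref{eq:kapa} yields exactly $v_N(t)=\bar{A}v_N(t-1)$, the choice $\kappa=0$ is always feasible for the LP \eqref{eq: LP to compute kappa}. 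This gives recursive feasibility.

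For BIBO stability, the $v_N$-recursion \eqref{eq:kapa} is unchanged from the nominal PRG: $v_N(t)$ is a convex combination of $\bar{A}v_N(t-1)$ and $r_N(t)$, and $\bar{A}$ is power-bounded (its eigenvalues are $0$ and $1$, and its structure forces $v(t)=v(N)$ for $t\ge N$), so boundedness of $r$ implies boundedness of $v_N$, hence of $v$. Combined with robust asymptotic stability of the uncertain closed loop — which promotes to a robust input-to-state estimate via the common Lyapunov function underlying the construction of \cite{pluymers2005efficient} — boundedness of $v$ yields boundedness of $x$ and $y$. For the convergence claim, I would again use that with $r(t)\equiv r$ the lifted reference is constant and the $v_N$-recursion does not involve $A(t)$ or $B(t)$ at all; the argument of Proposition~\ref{prop:1} then shows that each component of $v_N(t)$ is monotone and bounded by $r$ for $t\ge N$, so $v_N(t)$, and in particular $v(t)$, converges.

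The step requiring the most care is the BIBO argument, specifically justifying that boundedness of $v$ forces boundedness of $x$: unlike the LTI case in Proposition~\ref{prop:1}, one cannot simply cite asymptotic stability of a single $G$, and for genuinely time-varying $(A(t),B(t))$ a constant input need not even produce a convergent state trajectory — so the proof must lean on a robust input-to-state property of the uncertain loop rather than on an explicit steady-state formula, and the underlying robust-stability hypothesis should be stated explicitly (note that, in contrast to the nominal case, we only claim convergence of $v$, not of $x$ and $y$). The remaining steps are routine adaptations of Proposition~\ref{prop:1}.
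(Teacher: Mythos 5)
Your proposal follows essentially the same route as the paper's own (very terse) proof: invoke the robust positive invariance of the lifted-system MAS from \cite{pluymers2005efficient} to get feasibility of $\kappa=0$, and then repeat the arguments of Proposition~\ref{prop:1} for BIBO stability and convergence of $v$. Your additional care is welcome rather than divergent — in particular, making explicit the robust (quadratic) stability hypothesis on the uncertain closed loop needed for the boundedness of $x$ and $y$, and noting that only $v$, not $x$ or $y$, is claimed to converge — points the paper leaves implicit.
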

\begin{proof}
As proved in \cite{pluymers2005efficient}, the robust MAS for PRG is  positively invariant. The results follow using a similar approach as the proof of Proposition~\ref{prop:1}.
\end{proof}

\subsection{PRG for Systems with Uncertain Preview Information}
In previous sections, we assumed that the preview information is accurate along the preview horizon. However, this assumption might not hold in practice since the preview information may come from inaccurate measurements or uncertain models. In this section, we present an extension of PRG that can handle inaccurate preview information of references. As can be seen from \eqref{eq: LP to compute kappa}, if $r_N(t)$ has incorrect values, then  $v_N(t)$ will be calculated incorrectly. The implication of this is that, in the next time step, $v_N(t+1)$ will be computed based on the delayed version of this incorrect $v_N(t)$, which, for example, might cause $v(t)$ to change before $r(t)$ does. This behavior is unacceptable for most systems. Note that the constraints will still be satisfied; however, as argued above, performance may suffer. Thus, our solution in this section focuses on avoiding this loss of performance when the preview information is inaccurate.  
Essentially, we achieve this goal by modifying the update law for $v_N(t)$. 
%to robustify the Maximal Admissible Set. 

To begin, we assume that at time $t$,  $r(t)$ is accurately known, but there is uncertainty on the value of $r$ along the preview horizon, i.e., $r(t+1),\ldots,r(t+N)$ are inaccurate. 
%Along the preview horizon, the actual preview information, denoted by $\bar{r}_N$, is given by:
% \begin{equation}
% \underbrace{\begin{bmatrix}
% \bar{r}(t) \\\bar{r}(t+1)\\ \vdots \\\bar{r}(t+N) \end{bmatrix}}_{\bar{r}_N(t)}=
% \underbrace{\begin{bmatrix}
% r(t) \\r(t+1)\\ \vdots \\r(t+N) \end{bmatrix}}_{r_N(t)}+\begin{bmatrix}
% 0\\ \delta_1 \\ \vdots \\\delta_{N} \end{bmatrix}
% \end{equation}
% where $r_N$ represents the preview information that is known to system $G(z)$ (shown in Figure \ref{fig: preview Governor scheme}), $\delta_i \in \Delta_i$ represents the uncertainty on the preview information,} and $\Delta_i$ is defined  by: 
% $
% \Delta_i=[\underline{\delta_i}, \quad \bar{\delta_i}].
% $
To accommodate this uncertainty, we modify \eqref{eq: A_bar_def} to:
\begin{equation}\label{eq: A_bar_wrong_preivew}
    \bar{A}=\begin{bmatrix}
    1-\lambda_1 & \lambda_1 &0 &\ldots &0\\
    1-\lambda_1 & \lambda_1-\lambda_2  &\lambda_2 &\ldots &0 \\
    \vdots &\vdots  &\vdots &\vdots & \vdots\\
    1-\lambda_1 & \lambda_1-\lambda_2  &\lambda_2-\lambda_3 & \ldots &\lambda_{N} \\
        1-\lambda_1 & \lambda_1-\lambda_2  &\lambda_2-\lambda_3 & \ldots &\lambda_{N} 
    \end{bmatrix}
\end{equation}
where $\lambda_i \in [0,1]$ are tuning parameters to account for preview uncertainties. This means that instead of the delayed structure of \eqref{eq: A_bar_def}, $v_N(t)$ now evolves such that the value of $v$ at each time along the preview window is a convex combination of the value at that time and the values in the previous times. This effectively incorporates ``forgetting" into the formulation to counteract the uncertainties in preview information.  %The suggested way to tune $\lambda_i$ is as follows.
We tune $\lambda_i$ according to the relative level of uncertainty on the $i$-th preview information, $r(i)$. Specifically, if the uncertainty is large, $\lambda_i$ is chosen to be close to $0$.  If this is the case, PRG with the new $\bar{A}$ matrix (i.e.,~\eqref{eq: A_bar_wrong_preivew}) will turn to SRG. If $r_i$ is very accurate, on the other hand, $\lambda_i$ will be chosen close to $1$. Then, PRG with the new $\bar{A}$ will turn to standard PRG (as shown in Section \ref{sec: preview RG}). Thus, robust PRG is a proper extension of both standard PRG and SRG. Note that typically the level of uncertainty on $r_i$ increases along the preview horizon, which means that the sequence of $\lambda_i$ is increasing. %Thus, every element in $\bar{A}$ (see \eqref{eq: A_bar_wrong_preivew}) is larger or equal to $0$. 

The construction of $O_\infty^N$ and the computation of $v_N$ are the same as \eqref{eq:O_inf for preview} and \eqref{eq: kappa for multi kappa}, except that  $\bar{A}$ is changed  from \eqref{eq: A_bar_def} to \eqref{eq: A_bar_wrong_preivew}.  Now, we will  illustrate this approach using the one-link arm robot example (see the example in Section  \ref{sec: preview RG}). Suppose first that the preview horizon, $N$, is equal to $1$. The slice of $O_\infty^N$ at $x=0$ is shown in Figure \ref{fig: slice of MAS}. The red region corresponds to the slice of $O_\infty^N$ given by the standard PRG at $x=0$. The green region represents the slice of $O_\infty^N$ given by the robust PRG at $x=0$, where we have chosen $\lambda_1=0.2$. Note that in this case ($N=1$), the matrix $\bar{A}$ has only one $\lambda$. 
\begin{figure}[t]
\centering	
\includegraphics[scale=0.28]{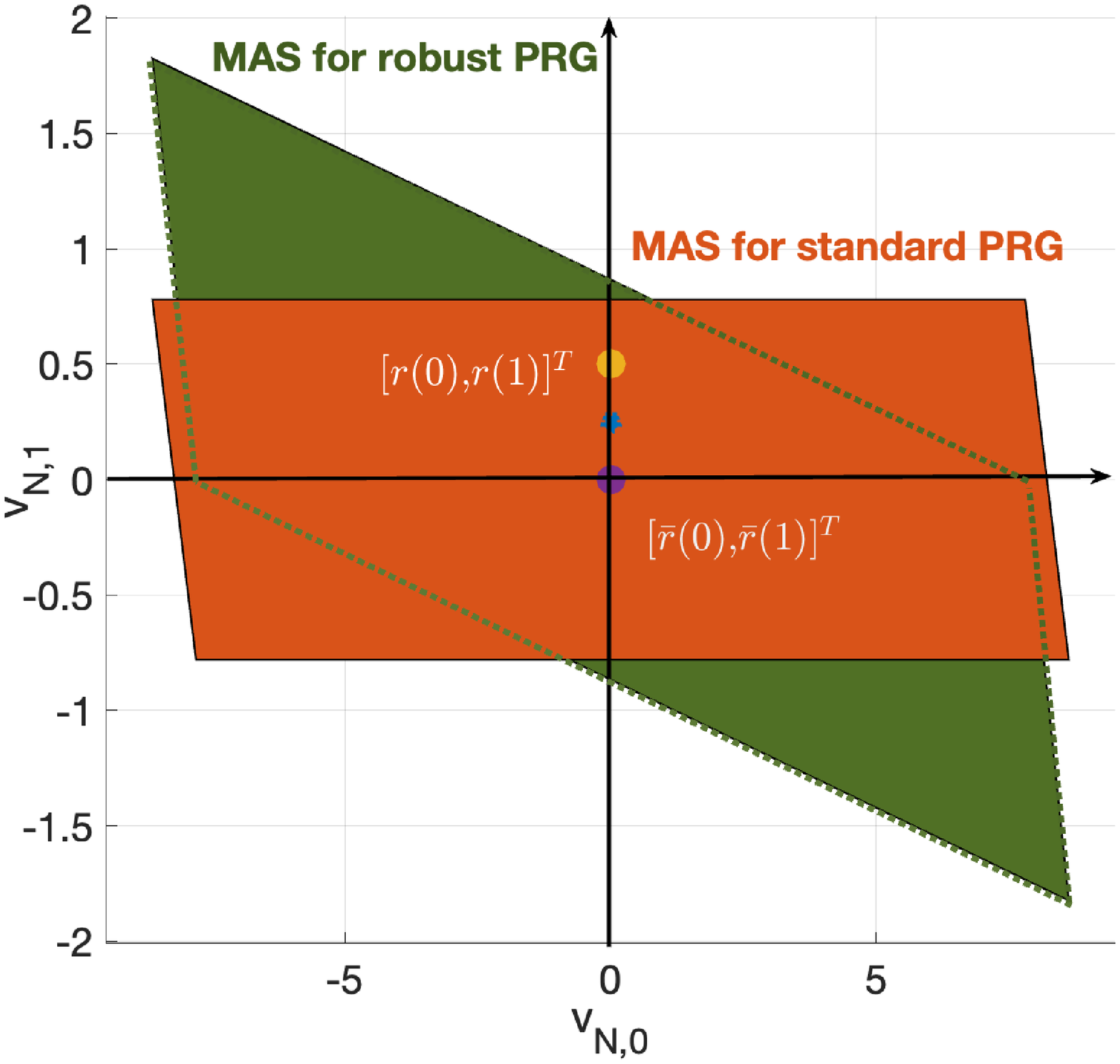}
\caption{The slice of $O_\infty^N$ at $x=0$. $v_{N,0}$ and $v_{N,1}$ represent the first and second element in $v_N$, respectively.}
\label{fig: slice of MAS}
\end{figure}
The situation we consider is as follows. Suppose at $t=0$, the preview information is given by $r_N(0):=(r(0),r(1))=(0,0.5)$ (yellow dot in Figure \ref{fig: slice of MAS}). However, assume that the {\it actual} preview information is $(\bar{r}(0),\bar{r}(1))=(0,0)$ (purple dot), which is unknown to the PRG. Note that $r(0)=\bar{r}(0)$ since we assume that the current information is accurate. Obviously, $v_N(0)$'s given by standard PRG and robust PRG (discussed in previous paragraph) will  be equal to $r_N(0)$ (i.e., $v_N(0)=(0,0.5)$) since $r_N(0)$ is in both MASs (red and green regions in Figure \ref{fig: slice of MAS}). 
%Recall that in the characterization of $O_\infty^N$, it is assumed that $v_N$ has dynamic as shown in \eqref{eq: A_bar_def}, which corresponds to $\kappa=0$ (see \eqref{eq: LP to compute kappa}). Also, from \eqref{eq: LP to compute kappa}, it can be seen that $\kappa=0$ is the safest choice of $v_N$ to ensure constraints satisfaction. 
In the next time step ($t=1$), if $\kappa \neq 1$,  $v_N(1)$ given by standard PRG (i.e., \eqref{eq: LP to compute kappa}) is a convex combination between the delay version of $v_N(0)$ (i.e., $(0.5,0.5)$) and $r_N(1)$. For robust PRG, $v_N(1)$ is a convex combination between $\bar{A}v_N(0)=(0.45,0.45)$ and $r_N(1)$. Note that if the uncertainty is large, $\lambda_1$ would  be chosen  close to $1$ and $\bar{A}v_N(0)$ will be close to $(0,0)$. By doing so,  robust PRG decreases the impact of the wrong previous information on the computation of current $v_N$ by multiplying the previous information with  values smaller than $1$ (i.e., $\lambda$'s).   

\begin{remark}
By comparing \eqref{eq: A_bar_wrong_preivew} and \eqref{eq: A_bar_def}, it can be seen that, for robust PRG, the last element in $v_N$  is multiplied by a value less than $1$ and the first element in $v_N$ is multiplied by a value that is not equal to $0$, which is different from those in standard PRG. This implies that the MAS for robust PRG will stretch along the $v_{N,N}$ axis and shrink along the $v_{N,0}$ axis  as compared with the MAS for standard PRG (as shown in Figure \ref{fig: slice of MAS}). Thus, unlike the robust MAS for disturbances and polytopic uncertainties, it is not the case   that the robust MAS for inaccurate preview information is a subset of the MAS for standard PRG. 
\end{remark}

\begin{figure}[t]
\centering	
\includegraphics[scale=0.28]{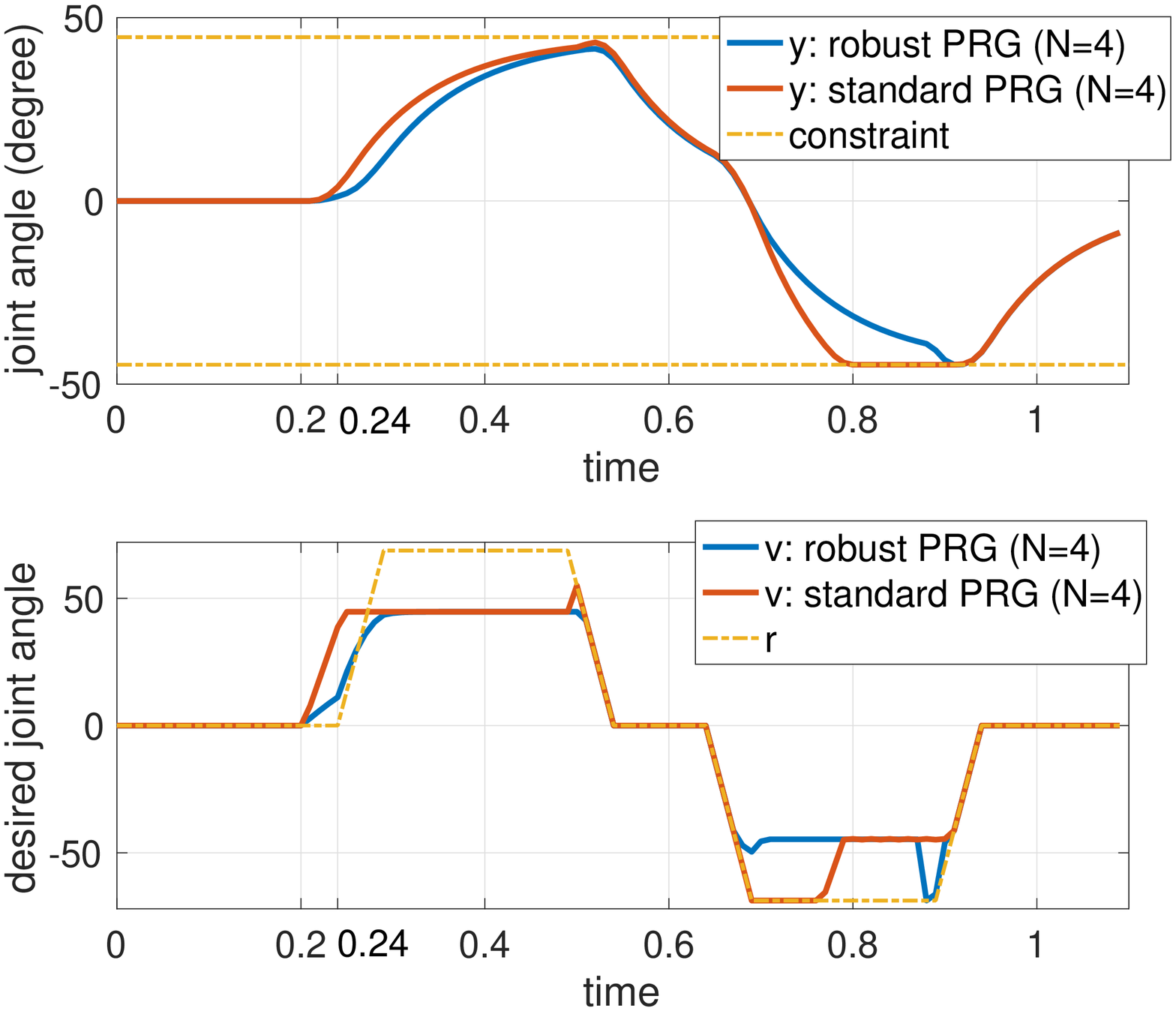}
\caption{Comparison of standard PRG and robust PRG. Red lines represent the simulation results of standard PRG and blue lines refer to the simulation results of robust PRG.}
\label{fig: convex PRG}
\end{figure}

We now perform numerical simulations of the one-link arm robot example. We change the preview horizon to $N=4$ for the sake of illustration.  We consider the scenario where the assumed preview information is larger than the actual preview information along the preview horizon. The $\bar{A}$ is chosen to be:
\begin{equation}\label{eq: A_bar_example}
\bar{A}=\begin{bmatrix}
0.1 & 0.9 &0 &0 &0 \\
0.1 & 0.15 &0.75 &0&0\\
0.1 & 0.15 &0.3 &0.45&0\\ 
0.1 & 0.15 &0.3 &0.35 &0.1\\
0.1 & 0.15 &0.3 &0.35 &0.1
\end{bmatrix}
\end{equation}
Note that several sets of $\lambda$'s were tried and the above $\bar{A}$ was found to result in the best performance (will be explained later). We acknowledge that there are other possibilities to choose $\bar{A}$ and finding the optimal set of $\lambda$s will be explored in future work. The comparison between standard PRG and robust PRG is shown in Figure \ref{fig: convex PRG}. It can be seen that when $t \in~ [0.21,0.24]$, $v(t)$ given by robust PRG (blue line) is closer to $r(t)$ than that given by standard PRG (red line).  The reason for the behavior can be explained as follows.  At $t=0.20$,  $v_N(20)$'s given by standard PRG and robust PRG are both equal to $r_N(20)$. Recall that the first element in $r_N(20)$ is accurate and the rest elements in $r_N(20)$ are inaccurate. Then, in the next time step, for standard PRG, $v(21)$ is calculated as a convex combination between the delayed version of $r_N(20)$ (inaccurate) and $r_N(21)$. With $\kappa=0.25$,  $v(21)$ is calculated as $0.13$. For robust PRG,  $v(21)$ is computed as a convex combination between  $\bar{A}r_N(20)$, where $\bar{A}$ is shown in \eqref{eq: A_bar_example}, and $r_N(20)$. With  $\kappa=0.69$, $v(21)=0.05$. Note that by increasing the value of $\lambda_1$ in \eqref{eq: A_bar_example} (i.e., $0.1$), $v(21)$ would become  closer to $r(21):=0$. However, the tracking performance would not be improved significantly as compared with SRG.

\begin{prop}\label{prop: robustPRG to inaccurate preview}
The robust PRG formulation to inaccurate preview information is BIBO stable, recursively feasible, and for a constant $r$, $v(t)$ converges.
\end{prop}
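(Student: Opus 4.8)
The plan is to leverage the fact that the robust PRG of this subsection differs from standard PRG \emph{only} through the choice of the internal matrix $\bar A$: equation \eqref{eq: A_bar_def} is replaced by \eqref{eq: A_bar_wrong_preivew}, while the MAS construction \eqref{eq:O_inf for preview}, the update law \eqref{eq:kapa}, and the LP for $\kappa$ are structurally unchanged. Consequently I would revisit the proof of Proposition~\ref{prop:1} and verify which arguments survive the change of $\bar A$. Recursive feasibility transfers verbatim: taking $\kappa=0$ in \eqref{eq:kapa} (with the new $\bar A$) yields $v_N(t)=\bar A v_N(t-1)$, which is precisely the autonomous dynamics used to build $O_\infty^N$; since $O_\infty^N$ is positively invariant under those dynamics for \emph{any} fixed state matrix used in its construction, admissibility of $(x(t-1),v_N(t-1))$ implies admissibility of $(x(t),\bar A v_N(t-1))$, so $\kappa=0$ remains a feasible solution at the next step, exactly as in Proposition~\ref{prop:1}.

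For BIBO stability the one new ingredient is that the matrix $\bar A$ in \eqref{eq: A_bar_wrong_preivew} is row-stochastic: each of its rows telescopes to $1$, and for the intended parameter range $1\ge\lambda_1\ge\cdots\ge\lambda_N\ge 0$ all its entries are nonnegative, so $\|\bar A\|_\infty=1$. Writing \eqref{eq:kapa} as the convex combination $v_N(t)=(1-\kappa)\,\bar A v_N(t-1)+\kappa\, r_N(t)$ with $\kappa\in[0,1]$, nonexpansiveness of $\bar A$ in the $\infty$-norm gives $\|v_N(t)\|_\infty\le\max\{\|v_N(t-1)\|_\infty,\|r_N(t)\|_\infty\}$, so $v_N(t)$ and hence $v(t)$ stay bounded whenever $r(\cdot)$ is bounded; combined with the asymptotic stability of $G$ this yields BIBO stability as in Proposition~\ref{prop:1}.

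For convergence under a constant reference $r$ I would again exploit stochasticity. With $r(t)=r$ we have $r_N(t)=r\mathbf 1$, and $\bar A\mathbf 1=\mathbf 1$ makes $r\mathbf 1$ a fixed point of $\bar A$; hence the error $e(t):=v_N(t)-r\mathbf 1$ obeys the linear recursion $e(t)=(1-\kappa_t)\,\bar A\, e(t-1)$, whose solution is $e(t)=c_t\,\bar A^{\,t}e(0)$ with $c_t:=\prod_{s=1}^{t}(1-\kappa_s)\in[0,1]$ monotonically non-increasing, hence convergent to some $c_\infty\ge 0$. If $c_\infty=0$ then $\|e(t)\|_\infty\le c_t\|e(0)\|_\infty\to 0$. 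If $c_\infty>0$, convergence of $e(t)$ reduces to convergence of $\bar A^{\,t}e(0)$, which I would obtain from the structure of $\bar A$: when $\lambda_1<1$ the first column of $\bar A$ is strictly positive and the $(1,1)$ entry is positive, so $\bar A$ is primitive and $\bar A^{\,t}\to\mathbf 1\pi^\top$ by Perron--Frobenius; the degenerate choices $\lambda_1=1$ (standard PRG) or some intermediate $\lambda_i=0$ only make $\bar A^{\,t}$ eventually constant or collapse the recurrent part to a smaller primitive block, so in every admissible case the only unit-modulus eigenvalue of $\bar A$ is a semisimple $1$ and $\bar A^{\,t}$ has a limit. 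Either way $e(t)$, hence $v_N(t)$, converges, and convergence of $x(t)$ and $y(t)$ then follows from asymptotic stability of $G$.

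The main obstacle is this last (convergence) step. In Proposition~\ref{prop:1} it was proved using the finite-delay structure of \eqref{eq: A_bar_def}: there $v(t)$ becomes \emph{constant} after $N$ steps, which reduces the problem to standard single-state MAS theory and a monotonicity-plus-boundedness argument on the entries of $v_N(t)$. The ``forgetting'' matrix \eqref{eq: A_bar_wrong_preivew} destroys this structure --- it keeps mixing all preview entries indefinitely --- so the monotonicity argument no longer applies, and the crux of the proof is to replace it by the stochasticity/primitivity analysis of $\bar A$ sketched above; once that is in place, recursive feasibility and BIBO stability are near-verbatim transcriptions of the proof of Proposition~\ref{prop:1}.
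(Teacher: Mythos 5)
Your proposal is correct, and for two of the three claims it coincides with the paper: the paper simply states that BIBO stability and recursive feasibility carry over from Proposition~\ref{prop:1}, which is exactly your (slightly more detailed) argument via positive invariance under $v_N(t)=\bar A v_N(t-1)$ and the convex-combination/row-stochasticity bound. For the convergence step your route differs in its finishing move. The paper also starts from the observation that $\bar A$ in \eqref{eq: A_bar_wrong_preivew} is stochastic, asserts from this that $\bar A^j$ converges to some limit $\bar A_0$, and then reruns the Proposition~\ref{prop:1}-style argument: every element of $v_N(t)$ is claimed to be monotonic and bounded by $r$ once the powers have converged, hence convergent. You instead solve the error dynamics in closed form, $e(t)=v_N(t)-r\mathbf{1}=\bigl(\prod_{s\le t}(1-\kappa_s)\bigr)\bar A^{t}e(0)$, and conclude from convergence of the monotone scalar product together with convergence of $\bar A^{t}$, which you justify via positivity of the first column (primitivity / Perron--Frobenius) rather than stochasticity alone. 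This buys two things: it avoids the paper's somewhat delicate elementwise monotonicity claim, and it supplies the aperiodicity-type argument that stochasticity by itself does not give (periodic stochastic matrices have non-convergent powers), so your analysis actually tightens the paper's step. Both arguments implicitly require the entries of $\bar A$ to be nonnegative, i.e.\ non-increasing $\lambda_i$ as in the example \eqref{eq: A_bar_example}; you state this explicitly, the paper assumes it tacitly by calling $\bar A$ stochastic.
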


\begin{proof}
The proof of BIBO stable and recursively feasibility are the same as the proof for Proposition \ref{prop:1}.  To prove the convergence property, assume $r(t) = r, \forall t\in \mathbb{Z}_+$. Note that $\lim_{j\to\infty} \bar{A}^j=~\bar{A}_0$, where $\bar{A}_0$ is a static matrix, since $\bar{A}$ shown in \eqref{eq: A_bar_wrong_preivew} is a stochastic matrix. Then, from \eqref{eq:kapa}, it can be shown that every element in $v_N(t)$ is monotonic increasing after $\bar{A}^j$ converges and bounded by $r$. Thus, $v_N(t)$ must converge to a limit.
%Note that, since the closed-loop system $G$ is asymptotically stable, $x(t)$ and $y(t)$ would also converge.
\end{proof}

\section{PRG for multi-input system}\label{sec: PRG for multi-inputs}
In this section, we will briefly introduce an extension of PRG to multi-input systems. One possible, straightforward solution is to apply the PRG ideas from above directly to multi-input systems. However, as will be shown in an example later, this approach might lead to a conservative response since PRG uses a single decision variable to simultaneously govern all the channels of the multi-input system.  To address this shortcoming, we propose another solution, which combines the PRG idea with the Decoupled Reference Governor (DRG) scheme \cite{liu2018decoupled}. Detailed information  will be introduced below. 

To begin, suppose that system $G(z)$ in Figure \ref{fig: preview Governor scheme} has $m$ inputs, i.e., $v(t), r(t)\in\mathbb{R}^m$. Let us denote the preview horizon for the $m$ different inputs (i.e., $r_1, \ldots, r_m$) by $N_1, \ldots, N_m$, respectively. Define the lifted signals $r_N $ and $v_N$ as follows:
\begin{equation}\label{eq: def of v_bar_multi_inputs}
\begin{aligned}
\setlength\arraycolsep{2pt}
r_N(t)=(r_1(t), \hdots, r_1(t+N_1),
\hdots,
r_m(t), \hdots, r_m(t+N_m))\\
\setlength\arraycolsep{2pt}
v_N(t)=(
v_1(t), \hdots, v_1(t+N_1), \hdots, 
v_m(t), \hdots, v_m(t+N_m))
% r_N(t)=\begin{bmatrix}
% r_1(t) \\ \vdots \\r_1(t+N_1)\\
% \vdots\\
% r_m(t) \\ \vdots \\r_m(t+N_m)\\
% \end{bmatrix}, v_N(t)=\begin{bmatrix}
% v_1(t) \\ \vdots \\ v_1(t+N_1)\\
% \vdots\\
% v_m(t) \\ \vdots \\v_m(t+N_m)\\
% \end{bmatrix}
\end{aligned}
\end{equation}
We can select the dynamics of $v_N(t)$ to be the same as \eqref{eq: A_bar} but with $\bar{A}$  constructed as:
\begin{equation}\label{eq: A_bar_multi_inputs}
\renewcommand*{\arraystretch}{.1}
\bar{A}=\begin{bmatrix}
\bar{I}_{N_1} & \ldots & 0_{N_1\times N_m}\\
\vdots & \ddots &\vdots\\
0_{N_m \times N_1} & \ldots &\bar{I}_{N_m}
\end{bmatrix}
\end{equation}
where $\bar{I}_{N_i}$ ($i=1,\ldots,m$) is defined the same as \eqref{eq: A_bar}, with $N$ replaced by $N_i$.
% $$
% \bar{I}_{N_i}=
% \left[
% 	\begin{array}{c:c}
% 	\smash{\underbrace{\begin{matrix}
%     0 & 1  &\ldots &0\\
%     % 0 & 0 & I_{m} &\ldots &0 \\
%     \vdots &\vdots  &\ddots & \vdots\\
%     0 & 0  & \ldots &1 \\
%     0 & 0  &\ldots &0\Addright{1}\\ \hdashline
%     0 & 0  &\ldots &0 \\
%     \end{matrix}}_{N_i}}&
%     \begin{matrix}
%     0 \\[0.1mm] \vdots \\[0.1mm] 0 \\[0.1mm] 1  \\[1.5mm]\hdashline  1 
%     \end{matrix}
%     \end{array}
% 	\right]
% $$

The construction of $O_\infty^N$ and the calculation of $v_N$ are the same as \eqref{eq:O_inf for preview} and \eqref{eq: LP to compute kappa}, except that $\bar{A}$ is modified to \eqref{eq: A_bar_multi_inputs}.  Below, we will provide an illustrative example to show that this approach works as required but might lead to a  conservative response. Consider a two-link arm robot, which has dynamics as follows \cite{mahil2016modeling}: 
\begin{equation}
\begin{aligned}
&\begin{bmatrix}
\dot{x_1}\\\dot{x_2}\\\dot{x_3}\\\dot{x_4}
\end{bmatrix}=\begin{bmatrix}
0 & 0& 1 &0\\
0 & 0& 0 &1\\
-0.46 & -0.62 & 0 & 0\\
0.25 & -6.62 & 0 & 0\\
\end{bmatrix}\begin{bmatrix}
x_1\\x_2\\x_3\\x_4
\end{bmatrix}+\begin{bmatrix}
0 & 0\\
0 & 0\\
0.78 & -0.04\\
0.04 & 0.13
\end{bmatrix}
\begin{bmatrix}
\tau_1\\\tau_2
\end{bmatrix}
% \\
% & \begin{bmatrix} 
% \theta_1\\
% \theta_2
% \end{bmatrix}=
% \begin{bmatrix}
% 1 &0 &0&0\\
% 0 &1 &0&0\\
% \end{bmatrix}\begin{bmatrix}
% x_1\\x_2\\x_3\\x_4
% \end{bmatrix}
\end{aligned}
\end{equation}
where $\tau_1$ and $x_1:=\theta_1$  are the external torque and the joint angle for the first link, respectively.  Similarly, $\tau_2$ and $x_2:=\theta_2$  are the torque and the joint angle for the second link, respectively. The constrained outputs are $\theta_1$ and $\theta_2$. The constraints on  both joint angles are $[-60,60]$. To implement PRG, the system is first discretized at $T_s=0.01s$.  Then, a state feedback controller is designed to ensure that $\theta_1$ and $\theta_2$ track desired joint angles, $v_1$ and $v_2$, respectively, that is:
{\small
$$
\begin{bmatrix}
\tau_1 \\ \tau_2
\end{bmatrix}=
\begin{bmatrix}
769.23  & 0\\ 
0 & 3333.3
\end{bmatrix}
\begin{bmatrix}
v_1\\
v_2
\end{bmatrix}-\begin{bmatrix}
750	& 155 &	59 &	19\\
-226 &	2867&	-18 &	350
\end{bmatrix}\begin{bmatrix}
x_1\\x_2\\x_3\\x_4
\end{bmatrix}
$$}\par
\noindent The preview horizons for both references are chosen to be $40$ (i.e., $N_1=N_2=40$). The simulation results of PRG of this multi-input system are shown in Figure \ref{fig: prg multi inputs 1}. As can be seen, the outputs both  satisfy the constraints, as required. However, when $t \in [0.8,1.2]$, $v_2$ can not reach $r_2$, which is shown by the purple dashed line in the bottom plot, even though $y_2$ (i.e., red line in the top plot) is far from the lower constraint. This is caused by the fact that $y_1$ (i.e., the blue line in the top plot of Figure \ref{fig: prg multi inputs 1}) reaches the constraint when $t \in [0.8,1.2]$, which results in $\kappa$ being equal to $0$. Since a single $\kappa$ is used in PRG scheme, $v_2$ can not reach $r_2$.

\begin{figure}[t]
\centering	
\includegraphics[scale=0.3]{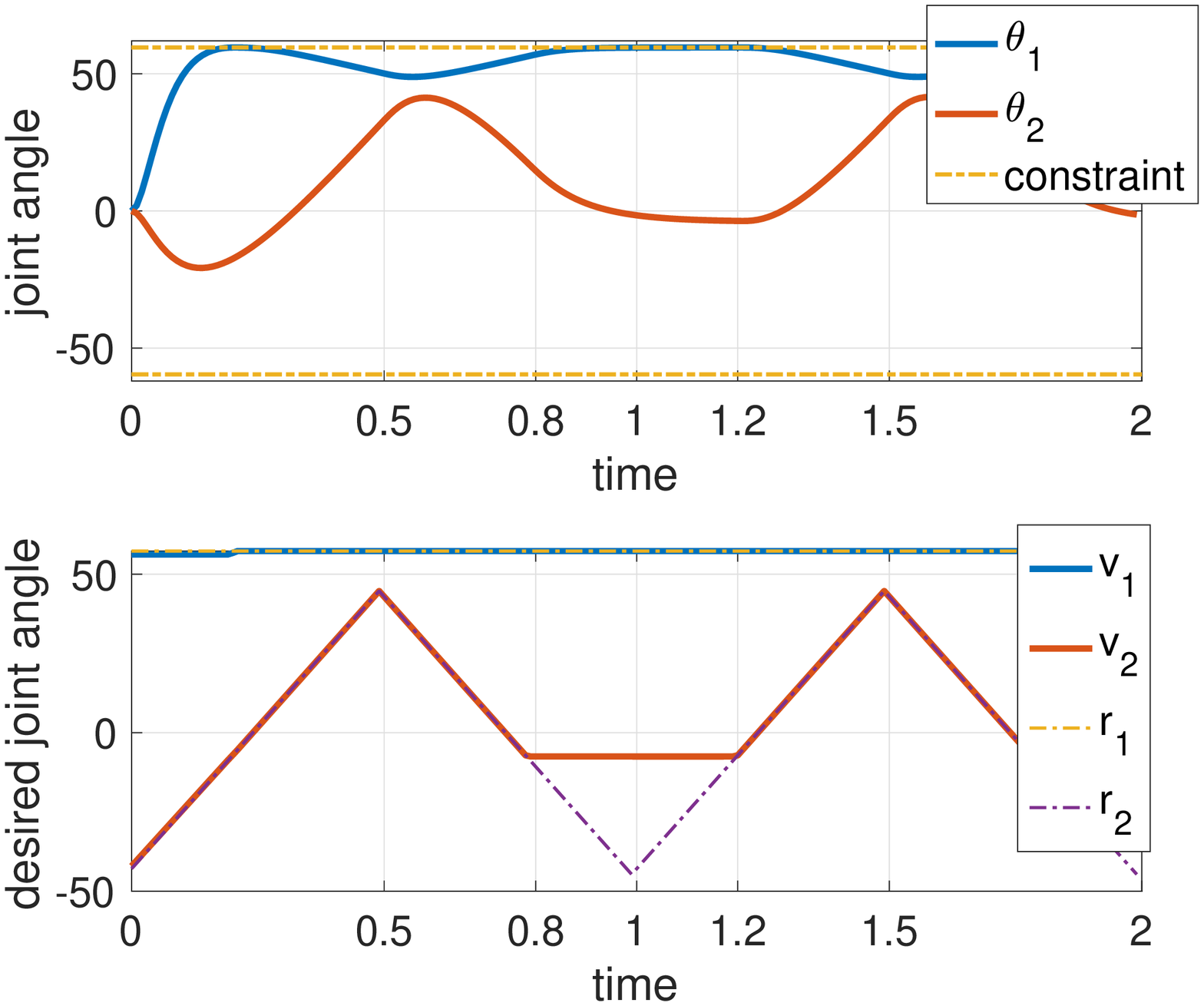}
\caption{Simulation results of PRG for the two-link arm robot. The blue lines represent the response of joint $1$ and the red lines represent the response of joint $2$.}
\label{fig: prg multi inputs 1}
\end{figure}

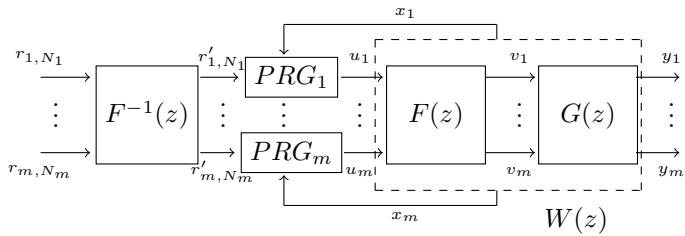
\begin{figure}
\centering
\begin{tikzpicture}
    [L1Node/.style={rectangle,draw=black,minimum size=5mm},
    L2Node/.style={rectangle,draw=black, minimum size=13mm},
    L3Node/.style={rectangle,draw=black, minimum size=30mm}]
      \node[L2Node] (n3) at (1.4, 0){$F^{-1}(z)$};
      \node[L1Node] (n4) at (3.3, 0.5){$PRG_{1}$};
      \draw (3.2,0.1)node [color=black,font=\fontsize{10}{10}\selectfont]{\vdots};
      \draw (2.4,0.1)node [color=black,font=\fontsize{10}{10}\selectfont]{\vdots};
      \draw (4.2,0.1)node [color=black,font=\fontsize{10}{10}\selectfont]{\vdots};
      \draw (6.3,0.1)node [color=black,font=\fontsize{10}{10}\selectfont]{\vdots};
      \draw (8.3,0.1)node [color=black,font=\fontsize{10}{10}\selectfont]{\vdots};
      \draw (0.2,0.1)node [color=black,font=\fontsize{10}{10}\selectfont]{\vdots};
      \node[L1Node] (n4) at (3.3, -0.5){$PRG_{m}$};
      \node[L2Node] (n5) at (5.2, 0){$F(z)$};
      \node[L2Node] (n6) at (7.2, 0){$G(z)$};
	   \draw[dashed](4.4,1)-- (7.9,1);
      \draw[dashed](4.4,1)-- (4.4,-1);
      \draw[dashed](4.4,-1)-- (7.9,-1);
      \draw[dashed](7.9,-1)-- (7.9,1);
      \draw[->](0,0.5)--(0.65,0.5);
      \draw(6,1)--(6,1.2);
      \draw(6,1.2)--(3.2,1.2);
      \draw[->](3.2,1.2)--(3.2,0.8);
      \draw(6,-1)--(6,-1.2);
      \draw(6,-1.2)--(3.2,-1.2);
      \draw[->](3.2,-1.2)--(3.2,-0.8);

      \draw[->](0,-0.5)--(0.65,-0.5);
      \draw[->](2.1,0.5)--(2.6,0.5);
      \draw[->](2.1,-0.5)--(2.55,-0.5);
      \draw[->](3.95,0.5)--(4.5,0.5);
      \draw[->](3.99,-0.5)--(4.5,-0.5);       
      \draw[->](5.85,0.5)--(6.5,0.5);
      \draw[->](5.85,-0.5)--(6.5,-0.5);
      \draw[->](7.83,0.5)--(8.4,0.5);
      \draw[->](7.83,-0.5)--(8.4,-0.5);
      \draw (0,0.75)node [color=black,font=\fontsize{6}{6}\selectfont]{$r_{1,N_1}$};
      \draw (0,-0.75)node [color=black,font=\fontsize{6}{6}\selectfont]{$r_{m,N_m}$};
      \draw (2.4,0.75)node [color=black,font=\fontsize{6}{6}\selectfont]{$r_{1,N_1}'$};
      \draw (2.4,-0.75)node [color=black,font=\fontsize{6}{6}\selectfont]{$r_{m,N_m}'$};
        \draw (4.2,0.75)node [color=black,font=\fontsize{6}{6}\selectfont]{$u_{1}$};
      \draw (4.2,-0.75)node [color=black,font=\fontsize{6}{6}\selectfont]{$u_{m}$};
      \draw (6.3,0.75)node [color=black,font=\fontsize{6}{6}\selectfont]{$v_{1}$};
      \draw (6.3,-0.75)node [color=black,font=\fontsize{6}{6}\selectfont]{$v_{m}$};
      \draw (8.3,0.75)node [color=black,font=\fontsize{6}{6}\selectfont]{$y_{1}$};
      \draw (8.3,-0.75)node [color=black,font=\fontsize{6}{6}\selectfont]{$y_{m}$};
      \draw (6.5,-1.4)node [color=black,font=\fontsize{10}{10}\selectfont]{\hspace{1cm} $W(z)$};
      \draw (4.8,1.35)node [color=black,font=\fontsize{6}{6}\selectfont]{$x_{1}$};
        \draw (4.8,-1.35)node [color=black,font=\fontsize{6}{6}\selectfont]{$x_{m}$}; 
\end{tikzpicture}

\caption{PRG block diagram for square MIMO systems. $r_{i,N_i}(t)$ represents the lifted $r_i$ over the preview horizon, i.e., $r_{i,N_i}(t)=(r_i(t),\ldots,r_i(t+N_i))$.} \label{fig:Decoupled with RG}
 \label{fig: PRG system block for mimo systems}
\end{figure}

From the above discussion, it can be seen that this approach might lead to a conservative response. To address this shortcoming, we propose another method, which  combines the PRG theory with the DRG scheme in \cite{liu2020decoupled}. As a quick review, DRG is based on decoupling the input-output dynamics of the system, followed by the application of SRG to each decoupled channel. In our solution, instead of using SRGs to govern the decoupled channels, we use the PRG presented in Section \ref{sec: preview RG}. This leads to the block diagram shown in Figure \ref{fig: PRG system block for mimo systems}. For ease of presentation, we will assume the system is square (i.e., equal number of inputs and outputs). Non-square systems can be handled as well by considering the DRG scheme for non-square systems presented in \cite{liu2020decoupled}. 
%Note also that the DRG scheme  assumes that the MIMO system, $G(z)$,  is square (i.e., the number of inputs is equal to the number of outputs). This assumption of $G(z)$ will still be held for this approach since it is based on DRG scheme. Again, the first approach can be applied to non-square MIMO systems as well. 

Thus, we suppose the closed-loop system $G(z)$ is described by a $m \times m$ transfer function system matrix  with elements $G_{ij}$.
%which is given by:
% $$
% G(z)=\begin{bmatrix}
%       G_{11}(z) & \ldots & G_{1m}(z)\\
%       \vdots & \ddots & \vdots \\
%       G_{m1}(z) & \ldots & G_{mm}(z)
%      \end{bmatrix}
% $$
As shown in Figure \ref{fig: PRG system block for mimo systems}, we decouple $G(z)$ by introducing $F(z)$, leading to a diagonal system: $W(z)$. As discussed in \cite{liu2018decoupled}, there are several ways to construct $W(z)$. For the sake of brevity, we only consider the case where $W(z)$ is given by: $W(z)=\mathrm{diag}(G_{11},G_{22},\ldots,G_{mm})$. 
% $$
% W(z)=\begin{bmatrix}
% G_{11}(z) & \ldots & 0\\
% \vdots & \ddots &\vdots\\
% 0 &\ldots &G_{mm}(z)
% \end{bmatrix}
% $$
Note that the same process to combine PRG idea and DRG scheme can be applied to other formulations of $W(z)$ as well. Then, $m$ different PRGs for single-input systems (see Section \ref{sec: preview RG}) are implemented, one for each $W_{ii}$. Finally, as discussed in \cite{liu2018decoupled}, $F^{-1}$ is introduced to ensure that $v$ and $r$ are close if $r$ is not constraint-admissible. Simulation results of the second approach on the two-link arm robot are shown in Figure \ref{fig: prg multi inputs 2}. As can be seen, the outputs  satisfy the constraints, as required.  Also,  when $t \in [0.8,1.2]$, $v_2$  reaches $r_2$, which is represented by the purple dashed line in the bottom plot. This behavior can be explained as follows. When $t=0.8$, the PRGs have the future information that $r_2$ will drop down to $-56$ at $t=1$ and then go up towards $-20$ at $t=1.2$ (recall that $N_1=N_2=40$). Also, from the top plot of Figure \ref{fig: prg multi inputs 2}, it can be seen that when $t \in  [0.8,1.2]$, $y_2$ (red line in the top plot) does not reach the lower constraint, which leads to $\kappa_2=1$ (recall that $\kappa_2$ refers to the optimization variable given by the second PRG). Hence, $v_2$ reaches $r_2$.

\begin{figure}[t]
\centering	
\includegraphics[scale=0.28]{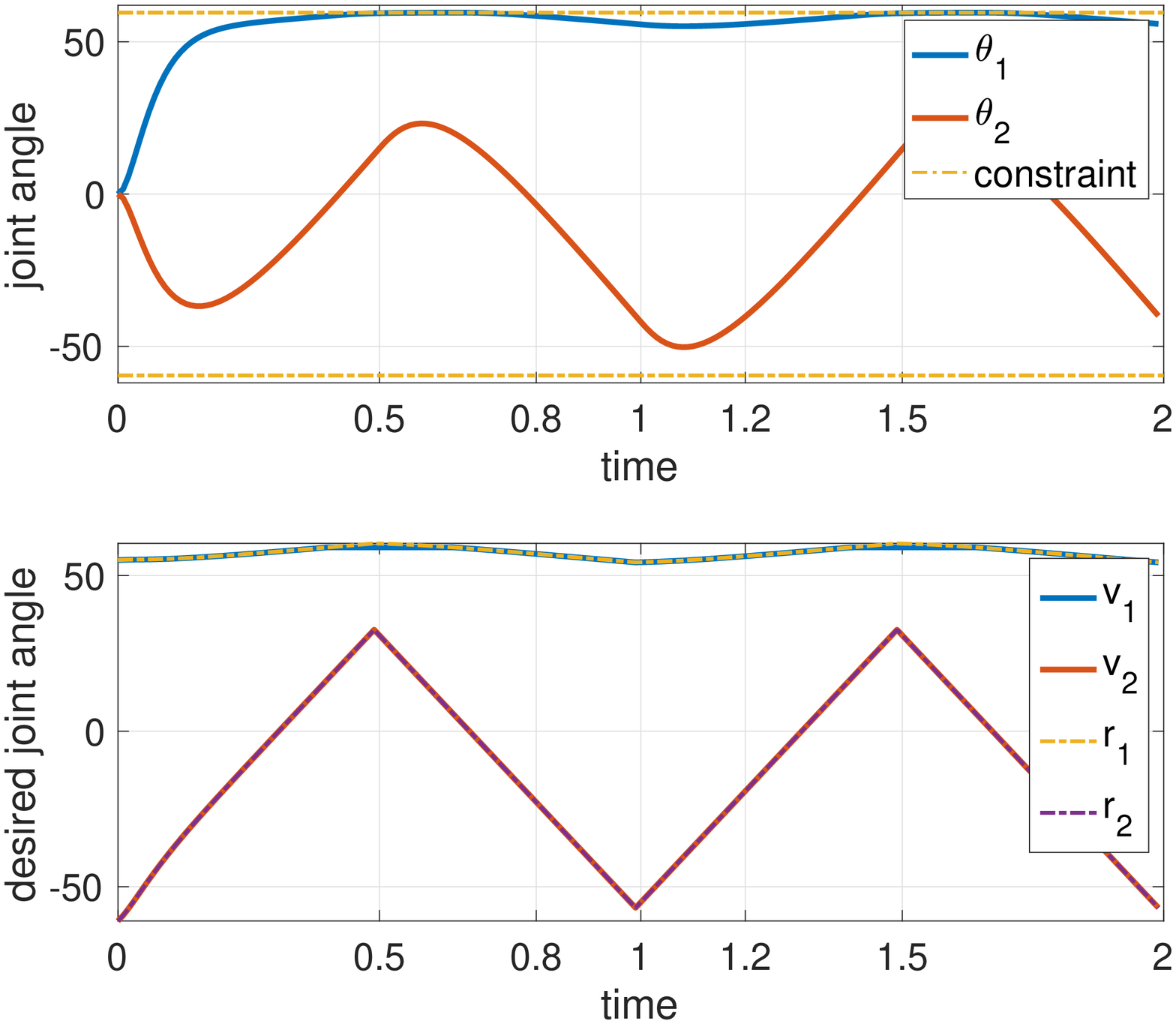}
\caption{Simulation results of PRG for the two-link arm robot. The blue lines represent the response of joint $1$ and the red lines represent the response of joint $2$.
}
\label{fig: prg multi inputs 2}
\end{figure}

Note that both approaches discussed above can be easily extended to multi-$N$ PRG by combining DRG scheme and Multi-N PRG idea. 

\vspace{-0.2cm}

\section{CONCLUSIONS AND FUTURE WORKS }\label{sec:conclusion}

In this work, a reference governor-based method for constraint management of linear systems is proposed. The method is referred to as Preview Reference Governor (PRG) and can systematically account for the preview information of reference and disturbance signals. The method is based on lifting the input of the system to a space of higher dimension and designing maximal admissible sets based on the system with lifted input. We showed a limitation of PRG and proposed an alternative method, which we refered to as Multi-horizon PRG (multi-$N$ PRG), to overcome the limitation. Disturbance previews, parametric uncertainties, and inaccurate preview reference information were also addressed. We also showed that the PRG for multi-input systems using the lifting idea (i.e., first solution in Section \ref{sec: PRG for multi-inputs}) might cause conservative response. Thus, we proposed another method, which combines the Decoupled Reference Governor scheme and PRG, to overcome this limitation.

Future work will explore preview control in the context of Vector Reference Governors, as well as finding the optimal set of $\lambda$s that gives the best performance in our robust PRG formulation.  We will also investigate the extension of PRG to nonlinear systems.

\vspace{-0.4cm}

%\section{References}
\medskip
%\bibliography{mybibfile.bib}
% \printbibliography

\bibliography{mybibfile}{}
\bibliographystyle{plain}

\end{document}